\newcommand\vldbdoi{XX.XX/XXX.XX}
\newcommand\vldbpages{XXX-XXX}
\newcommand\vldbvolume{19}
\newcommand\vldbissue{5}
\newcommand\vldbyear{2026}
\newcommand\vldbauthors{\authors}
\newcommand\vldbtitle{\shorttitle} 
\newcommand\vldbavailabilityurl{https://github.com/NTU-Siqiang-Group/ArceKV}
\useunder{\uline}{\ul}{}
\newcommand\oureng{\textsc{Arce}}
\newcommand\ourkv{\textsc{ArceKV}}
\newcommand\ourlsm{\textsc{ElasticLSM}}
\definecolor{ao(english)}{rgb}{0.0, 0.5, 0.0}
\newcounter{dingheng}
\numberwithin{dingheng}{section}
\newcounter{fan}
\numberwithin{fan}{section}
\newcounter{junfeng}
\numberwithin{junfeng}{section}
    \newcounter{Siqiang}
\numberwithin{Siqiang}{section}
\let\maketitlesup\maketitle\vspace{3ex}
\xpatchcmd{\maketitlesup}{\@mkteasers}{}{}{}
\xpatchcmd{\maketitlesup}{\@mkabstract}{}{}{}
\begin{document}
\title
[Response Letter for `{\ourkv}: Towards Workload-driven LSM-compactions for Key-Value Store Under Dynamic Workloads'']
{Response Letter for PVLDB Submission 812 \\
"{\ourkv}: Towards Workload-driven LSM-compactions for Key-Value Store Under Dynamic Workloads"}

\settopmatter{printfolios=false}
% \input{sections/abstract}
% \maketitlesup
% \pagestyle{empty}

% \setlength{\textfloatsep}{20.4pt plus 2.4pt minus 4.8pt}
\setlength{\textfloatsep}{6.0pt plus 0.8pt minus 5.6pt}
\setlength{\floatsep}{6.0pt plus 0.8pt minus 5.2pt}
\setlength{\intextsep}{6.0pt plus 0.8pt minus 5.8pt}
\setlength{\abovedisplayskip}{4.pt plus 0.4pt minus 3.0pt}
\setlength{\belowdisplayskip}{4.pt plus 0.4pt minus 3.0pt}
\tcbset{colframe=white!75!black, boxrule=0.3mm, left=0.4mm, right=0.4mm, top=0.4mm, bottom=0.4mm, arc=0.75mm, after skip=1mm, before skip=2mm}
\begin{abstract}
Key-value stores underpin a wide range of applications due to their simplicity and efficiency. Log-Structured Merge Trees (LSM-trees) dominate as their underlying structure, excelling at handling rapidly growing data. Recent research has focused on optimizing LSM-tree performance under static workloads with fixed read-write ratios.
However, real-world workloads are highly dynamic, and existing workload-aware approaches often struggle to sustain optimal performance or incur substantial transition overhead when workload patterns shift. To address this, we propose {\ourlsm}, which removes traditional LSM-tree structural constraints to allow more flexible management actions (i.e., compactions and write stalls) creating greater opportunities for continuous performance optimization. We further design {\oureng}, a lightweight compaction decision engine that guides {\ourlsm} in selecting the optimal action from its expanded action space.
Building on these components, we implement {\ourkv}, a full-fledged key-value store atop RocksDB. Extensive evaluations demonstrate that {\ourkv} outperforms state-of-the-art compaction strategies across diverse workloads, delivering around 3$\times$ faster performance in dynamic scenarios.
\end{abstract}

% \input{sections/ResponseLetter}
%%
%% The "title" command has an optional parameter,
%% allowing the author to define a "short title" to be used in page headers.
%\title{\graphkv: A Fast Storage Engine for Large-Scale Evolving Graphs}
%\title{\graphdb: Enhanced LSM-structures for Managing Evolving Disk Resident Graphs}
\title{{\ourkv}: Towards Workload-driven LSM-compactions for Key-Value Store Under Dynamic Workloads}
%%
%% The "author" command and its associated commands are used to define
%% the authors and their affiliations.
%% Of note is the shared affiliation of the first two authors, and the
%% "authornote" and "authornotemark" commands
%% used to denote shared contribution to the research.
\author{Junfeng Liu}
\email{junfeng001@e.ntu.edu.sg}
\affiliation{%
  \institution{Nanyang Technological University}
  \country{Singapore}
}
\author{Haoxuan Xie}
\email{haoxuan001@e.ntu.edu.sg}
\affiliation{%
  \institution{Nanyang Technological University}
  \country{Singapore}
}

\author{Siqiang Luo}
\email{siqiang.luo@ntu.edu.sg}
\affiliation{%
  \institution{Nanyang Technological University}
  \country{Singapore}
}

%%
%% By default, the full list of authors will be used in the page
%% headers. Often, this list is too long, and will overlap
%% other information printed in the page headers. This command allows
%% the author to define a more concise list
%% of authors' names for this purpose.
% \renewcommand{\shortauthors}{Liu, et al.}

%%
%% The abstract is a short summary of the work to be presented in the
%% article.
% \settopmatter{printfolios=true}
% \renewcommand\footnotetextcopyrightpermission[1]{}
 
% \vldbpagestyle{empty}
% To address this gap, we propose {\oureng}, a flexible compaction engine that restructures on-disk data based on workload characteristics, rather than relying on level fullness, as in traditional compaction methods.
\maketitle
%%% do not modify the following VLDB block %%
%%% VLDB block start %%%
% \pagestyle{\vldbpagestyle}
\begingroup\small\noindent\raggedright\textbf{PVLDB Reference Format:}\\
\vldbauthors. \vldbtitle. PVLDB, \vldbvolume(\vldbissue): \vldbpages, \vldbyear.\\
\href{https://doi.org/\vldbdoi}{doi:\vldbdoi}
\endgroup
\begingroup
\renewcommand\thefootnote{}\footnote{\noindent
This work is licensed under the Creative Commons BY-NC-ND 4.0 International License. Visit \url{https://creativecommons.org/licenses/by-nc-nd/4.0/} to view a copy of this license. For any use beyond those covered by this license, obtain permission by emailing \href{mailto:info@vldb.org}{info@vldb.org}. Copyright is held by the owner/author(s). Publication rights licensed to the VLDB Endowment. \\
\raggedright Proceedings of the VLDB Endowment, Vol. \vldbvolume, No. \vldbissue\ %
ISSN 2150-8097. \\
\href{https://doi.org/\vldbdoi}{doi:\vldbdoi} \\
}\addtocounter{footnote}{-1}\endgroup
%%% VLDB block end %%%

%%% do not modify the following VLDB block %%
%%% VLDB block start %%%
\ifdefempty{\vldbavailabilityurl}{}{
\vspace{.3cm}
\begingroup\small\noindent\raggedright\textbf{PVLDB Artifact Availability:}\\
The source code, data, and/or other artifacts have been made available at \url{\vldbavailabilityurl}.
\endgroup
}
%%% VLDB block end %%%
% \renewcommand\footnotetextcopyrightpermission[1]{}

\setcounter{figure}{0}
\setcounter{table}{0}
\setcounter{section}{0}
\setcounter{equation}{0}
\setcounter{page}{1} 

\section{Introduction}
\label{sec:intro}
% \begin{figure*}
% \begin{minipage}[b]{.8\linewidth}
% \centering
% \hspace{-8mm}
\begin{table*}[t]
\caption{Comparison between {\ourlsm} and existing workload-aware LSM-tree structures. The example assumes a three-level LSM-tree and a MemTable size of $F$.}
\vspace{-3mm}
\renewcommand\arraystretch{1.28}
\scalebox{0.78}{
\begin{tabular}{@{}|l|cccc|cccc|@{}}
\toprule
\multicolumn{1}{|c|}{\multirow{2}{*}{Methods}} & \multicolumn{4}{c|}{Structural Configuration}                                                                                                                                                             & \multicolumn{4}{c|}{LSM Management Actions}                                                                                                                                                                                                                                                                                                           \\ \cmidrule(l){2-9} 
\multicolumn{1}{|c|}{}                         & \multicolumn{1}{c|}{LSM structure}               & \multicolumn{1}{c|}{Size Ratios}                  & \multicolumn{1}{c|}{Level Capacities}                    & \#Sorted Run                            & \multicolumn{1}{c|}{Trigger Compaction}                                           & \multicolumn{1}{c|}{Picked Runs}                                                                          & \multicolumn{1}{c|}{Write stall}                                                  & Dynamic Workloads                                                 \\ \midrule
\multirow{2}{*}{Dostoevsky}                    & \multicolumn{1}{c|}{\multirow{2}{*}{Fluid Tree}} & \multicolumn{1}{c|}{\multirow{2}{*}{$\{T,T,T\}$}} & \multicolumn{1}{c|}{\multirow{2}{*}{$\{TF,T^2F,T^3F\}$}} & \multirow{2}{*}{$\{K,K,Z\}$}            & \multicolumn{1}{c|}{\multirow{2}{*}{Fullness of a level}}                         & \multicolumn{1}{c|}{\multirow{2}{*}{\begin{tabular}[c]{@{}c@{}}Adjacent or\\ same level(s)\end{tabular}}} & \multicolumn{1}{c|}{\multirow{2}{*}{\#files in L0 \textgreater{} $K$}}            & Lazy                                                              \\ \cmidrule(l){9-9} 
                                               & \multicolumn{1}{c|}{}                            & \multicolumn{1}{c|}{}                             & \multicolumn{1}{c|}{}                                    &                                         & \multicolumn{1}{c|}{}                                                             & \multicolumn{1}{c|}{}                                                                                     & \multicolumn{1}{c|}{}                                                             & Greedy                                                            \\ \midrule
Ruskey                                         & \multicolumn{1}{c|}{FLSM}                        & \multicolumn{1}{c|}{$\{T,T,T\}$}                  & \multicolumn{1}{c|}{$\{TF,T^2F,T^3F\}$}                  & $\{K_1,K_2,K_3\}$                       & \multicolumn{1}{c|}{Fullness of a level}                                          & \multicolumn{1}{c|}{\begin{tabular}[c]{@{}c@{}}Adjacent or\\ same level(s)\end{tabular}}                  & \multicolumn{1}{c|}{\#files in L0 \textgreater{} $K_1$}                           & Moderate                                                          \\ \midrule
Moose                                          & \multicolumn{1}{c|}{Generalized LSM}             & \multicolumn{1}{l|}{$\{r_1,r_2,r_3\}$}            & \multicolumn{1}{l|}{$\{r_1F,r_1r_2F,r_1r_2,r_3F\}$}      & $\{\sqrt{r_1},\sqrt{r_2}, \sqrt{r_3}\}$ & \multicolumn{1}{c|}{Fullness of a level}                                          & \multicolumn{1}{c|}{Adjacent levels}                                                                      & \multicolumn{1}{c|}{\#files in L0 \textgreater{} $\sqrt{r_1}$}                    & Not applicable                                                    \\ \midrule
ArceKV                                         & \multicolumn{1}{c|}{ElasticLSM}                  & \multicolumn{1}{c|}{Removed}                      & \multicolumn{1}{c|}{Removed}                             & Removed                                 & \multicolumn{1}{c|}{\begin{tabular}[c]{@{}c@{}}Workload\\ Dependent\end{tabular}} & \multicolumn{1}{c|}{\begin{tabular}[c]{@{}c@{}}Runs from\\ multiple levels\end{tabular}}                  & \multicolumn{1}{c|}{\begin{tabular}[c]{@{}c@{}}Workload\\ Dependent\end{tabular}} & {\begin{tabular}[c]{@{}c@{}}Consistently\\ Optimizing\end{tabular}} \\ \bottomrule
\end{tabular}
}
\label{tab:intro}
\vspace{-4mm}
\end{table*}
% \end{minipage}%
% \begin{minipage}[b]{.3\linewidth}
% \includegraphics[width=\linewidth]{figures/score.pdf}
% \end{minipage}
% \end{figure*}
% \begin{table*}[t]

% \end{table*}
% Key-value (KV) stores have garnered significant attention from both academic and industrial communities due to their efficiency in data retrieval and storage. These systems utilize a fundamental abstraction that maps unique keys to corresponding values, facilitating rapid data access. KV stores are integral to various applications, including distributed caching~\cite{elasticache}, large-scale databases~\cite{chang2008bigtable,tidb,cockroachdb,corbett2013spanner}, and cloud computing services~\cite{memorystore,elasticache,workers_kv,sivasubramanian2012amazon}.{\color{red} this para is not very important. In case of space limitation, can consider removing this part. If to remove this paragraph, only need to add one sentence about kv store description into the next paragraph.}
Key-value (KV) stores map unique keys to values for fast data access and are widely used in distributed caching, large-scale databases, and cloud services~\cite{elasticache,chang2008bigtable,tidb,cockroachdb,corbett2013spanner,memorystore,workers_kv,sivasubramanian2012amazon}.
Log-Structured Merge Trees (LSM-trees) are fundamental data structures underpinning KV stores, widely supporting modern databases and applications~\cite{cockroachdb,tidb,lakshman2010cassandra,yang2022oceanbase,dgraph,corbett2013spanner}. % to accommodate today’s rapidly growing data generation. 
For example, Netflix deploys and optimizes Apache Cassandra~\cite{lakshman2010cassandra}, which is supported by LSM-trees, to effectively handle write-intensive workloads~\cite{netflix}. %, handling up to 4.9 million writes per second 
The LSM-tree improves write performance by organizing data as KV entries and deferring expensive in-place updates. It organizes data into multiple hierarchical levels, each with exponentially increasing capacities, structured as sorted runs. New KV entries are first appended to a main-memory buffer (or MemTable); when this buffer fills up, the entries are sorted, compacted, and merged as a larger sorted run into the next level. This background compaction process cascades downwards whenever a level reaches its capacity threshold.

\vspace{1mm}
\noindent{\bf Practical Challenge: Self-adaptation for dynamic workloads.} 
In LSM-tree-based key-value stores, a major challenge lies in online handling dynamically changing workloads. Prior studies~\cite{curino2011workload,gmach2007workload,cao2020characterizing} have shown that real-world applications often exhibit significant workload variability, driven by daily usage patterns and operational shifts. For example, Meta analyzed access patterns from five distinct applications and found that each exhibits highly diverse workload behaviors, with substantial variation occurring even within a single day~\cite{facebookworkload}. This underscores the need to efficiently manage fluctuating ratios of key lookups and entry updates. While many workload-aware methods have been proposed to optimize LSM-tree systems for a given workload, a key challenge remains unresolved for evolving workloads.

Existing workload-aware methods compute a structural configuration, including level capacities, the number of sorted runs, and their sizes to guide compactions and manage write stalls for a given workload. However, when the workload changes, the optimal configuration often changes as well, requiring the system to adapt accordingly. While methods like Moose~\cite{moose} and Wacky~\cite{lsmbush} deliver excellent performance under static workloads, they do not provide mechanisms for transitioning between configurations, making them unsuitable for dynamic workloads.
Naively or greedily resizing runs and merging data during such transitions may introduce latency spikes, as more aggressive write stalls~\cite{dostoevsky2018, ruskey} are often required to reach the desired structure. Dostoevsky~\cite{dostoevsky2018} not only computes a desirable configuration but also introduces a {\it lazy} adaptation strategy, adjusting the size and number of runs in a level only when it is fully compacted into the next. While this approach avoids costly data reorganization, it responds slowly to workload changes and depends on a sufficient number of updates to complete the transition.
In contrast, Ruskey~\cite{ruskey} proposes a {\it middle-ground} strategy called FLSM, which balances between greedy and lazy adaptation. It recalculates the structural configuration when performance degradation is observed and adjusts the active sorted runs during compactions at this level. Although this design accelerates responsiveness, it still relies on sufficient updates to trigger compactions, limiting its ability to adapt promptly under read-intensive workloads. In summary, the existing {\it recomputing and transitioning structure} approaches fail to achieve an excellent tradeoff between responsiveness to the changes and the transitioning overhead.

\vspace{1mm}
\noindent{{\bf Our Vision: Focus on the transition procedure, not on the final structure.}}
Existing approaches are limited by their {\bf rigid transition actions}, often aiming to directly reach a target LSM-tree structure without considering performance during the transition. We argue that under dynamic workloads, the focus should shift from morphing into a pre-defined structure to {\bf consistently optimizing performance} throughout the transition. While it is possible to compute the optimal LSM-tree for a given workload, blindly transitioning toward it may overlook more effective actions that yield better overall system performance.

Building on this insight, we propose two novel designs tailored to dynamic workloads:

\vspace{1mm}
\noindent{\bf {\ourlsm}: Expanding the Transition Action Space.} Existing LSM-trees rely on predefined structural configurations that fix the capacity and number of sorted runs per level, triggering compactions only when level capacity thresholds are exceeded. While this yields predictable costs, it limits flexibility under dynamic workloads. For example, proactively compacting runs across multiple levels— even when they are not full—during a read-intensive phase can further reduce runs and improve read performance.
To enable such flexibility, we introduce {\ourlsm}, which removes rigid limits on level capacities, run counts, and run sizes (Table~\ref{tab:intro}). {\ourlsm} follows a more flexible and workload-dependent policy, treating the LSM-tree as a flexible collection of sorted runs, each tagged with a timestamp, size, and key range. Compactions and write stalls can be triggered or deferred according to the current workload, and may involve runs from one or multiple levels, subject only to preserving the LSM-tree’s intrinsic timestamp ordering. This expanded design allows {\ourkv} to explore a broader set of valid actions, opening more opportunities to optimize performance.

\vspace{1mm}
\noindent{\bf {\oureng}: Lightweight Compaction Evaluation.}
While expanding the action space increases flexibility, it also complicates decision-making. Unlike structurally fixed LSM-trees, where compactions and stalls follow fixed rules with predictable amortized costs, the system must make online decisions in which each action impacts future {\ourlsm} states and costs. This turns the search for a globally optimal action sequence into an intractable, NP-hard problem (see Section~\S\ref{sec:decide}).
To address this, we introduce the \underline{\bf A}daptive \underline{\bf R}untime \underline{\bf C}ompaction \underline{\bf E}ngine ({\oureng}), a score-based evaluation framework that balances both short-term penalties and long-term benefits of compaction actions. With properly tuned parameters, this method restricts the attention to a small set of compactions that must be at least partially involved in the optimal sequence.
% With properly tuned hyperparameters and under a long-running workload, this method achieves an average decision cost no greater than twice the optimal cost.

Based on {\oureng}, we implement {\ourkv} on top of RocksDB, a widely used industrial LSM-tree storage engine, and evaluate its performance against state-of-the-art compaction policies, including Leveling~\cite{leveldb}, Tiering~\cite{lakshman2010cassandra}, LazyLeveling~\cite{dostoevsky2018}, Ruskey~\cite{ruskey}, and Moose~\cite{moose}. Results show that {\ourkv} achieves high update performance comparable to update-optimized designs while also maintains top-tier read performance compared to read-optimized designs under static workloads. It also adapts rapidly to workload shifts, within 20 million operations and without exhibiting significant latency spikes. Overall, {\ourkv} outperforms RocksDB, the most adaptive among the baselines.
{
Across the two evaluated workloads, {\ourkv} delivers an average performance improvement of 2.17× to 2.92× compared to Tiering and LazyLeveling, and 2.00× and 1.41× relative to 1‑Leveling, which serves as the strongest baseline in our experiments.
}
We further compare {\ourkv} with several industrial-grade databases, including Pebble~\cite{cockroachdb}, RocksDB~\cite{rocksdb}, Cassandra~\cite{lakshman2010cassandra}, and WiredTiger~\cite{wiredtiger}. {\ourkv} delivers over 10$\times$ speedup compared to Cassandra and WiredTiger, and performs 3$\times$ better than Pebble.

\vspace{1mm}
\noindent{\bf Contributions.} In summary, we make the following contributions:
\begin{itemize}[leftmargin=*]
    \item We identify the limitations of existing compaction policies under dynamic workloads and propose a new compaction engine {\oureng} that dynamically selects the most effective compaction and write stall threshold to adaptively balance read and write performance.
    \item We design a score-based model that efficiently estimates the benefit of each compaction and stall threshold pair, providing a near-optimal solution to the underlying NP-hard decision problem.
    \item We implement {\ourkv} on top of RocksDB and demonstrate its effectiveness through extensive evaluations against several state-of-the-art compaction strategies and industrial databases.
\end{itemize}

\vspace{-3mm}
\section{Background}
\label{sec:background}
\begin{figure*}[t]
    \centering
    \vspace{-8mm}
    \includegraphics[width=\linewidth]{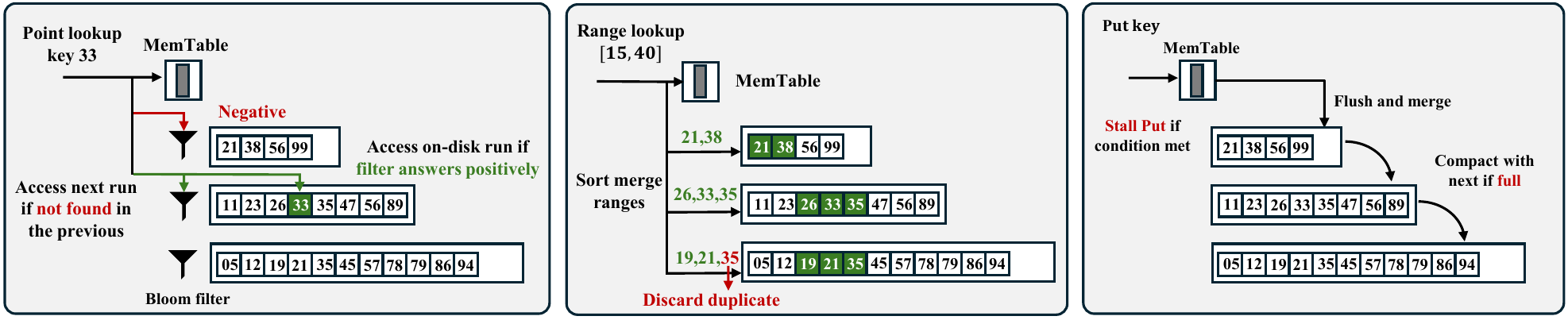}
    \vspace{-8mm}
    \caption{How the three basic operations, point lookup, range lookup, and update, are performed in an LSM-tree system.}
    \vspace{-5mm}
    \label{fig:opintro}
\end{figure*}

% This section provides some background knowledge on LSM-tree structure, compaction policies, and the write stall mechanism in most LSM-tree key-value systems.

\vspace{-1mm}
\subsection{LSM-tree}
\label{sec:LSM-tree}
% \begin{figure}[t]
%     \centering
%     \includegraphics[width=\linewidth]{figures/lsm.pdf}
%     \vspace{-5mm}
%     \caption{Two different compaction policies: Leveling and Tiering.}
%     \vspace{-5mm}
%     \label{fig:basic_policies}
% \end{figure}

LSM-tree is a persistent, multi-level indexing structure for key-value stores, which aims to obtain efficient write performance by transforming expensive in-place update into sequential update. All updates, insertions, and deletions are initially turned into a key-value entry and then sorted in a main memory buffer (or MemTable). It will be flushed into the disk as a new sorted run (or SSTable in RocksDB, SST for short) when it is full. These SSTs are organized into several levels, with each level having a capacity $T$ times larger than the previous one. In LSM-trees using a typical Leveling compaction policy, such as Pebble~\cite{cockroachdb}, SSTs at the same level are non-overlapping and collectively form a single sorted run. In contrast, Tiering-based systems like ScyllaDB~\cite{scylladb} allow each level to maintain up to $T$ key-overlapping sorted runs, reducing compaction size and improving write performance. LSM-based systems usually support three basic operations, shown in Figure~\ref{fig:opintro}:

% \vspace{1mm}
\noindent{\bf Point Lookup.} Given a key, the key-value store returns its associated value if it exists. The search proceeds by scanning each sorted run sequentially, stopping once the value is found. This process relies on the LSM-tree’s timestamp ordering across levels: the smallest timestamp in the $i$-th level must be no smaller than the largest timestamp in the $(i-1)$-th level. Within the same level, runs may have overlapping timestamps. If multiple versions of a key exist at a given level, the system returns the most recent one based on timestamp comparison. Without this cross-level timestamp order, point lookups would require searching all levels for every query, severely degrading performance.
Each sorted run is equipped with a {\it Bloom filter}, an in-memory structure that quickly determines whether a key may exist in the run. Its accuracy is controlled by the bits-per-key (BPK) parameter, representing the ratio of filter memory to the number of keys. The false positive rate (FPR) follows $FPR = O\left(e^{-BPK\cdot(\ln 2)^2}\right)$. Let $s$ be the total number of sorted runs; the I/O cost of a point lookup is then $O(s \cdot FPR + 1)$.

% \vspace{1mm}
\noindent{\bf Range Lookup.} Different from point lookup, the LSM-tree retrieves all the entries within a specified key range from all the sorted runs. And then it sort merges the results from each sorted runs and produces a final result. Specifically, as most LSM-tree systems leverage {\it iterator} to iteratively produce the final result, which reads the first data block (usually sized one I/O block) from each sorted run and then fetches the entries one by one from each sorted runs. Suppose the search range contains $l$ entries, each of size $E$ bytes, and the I/O block size is $B$ bytes, the I/O cost is $O(s + \frac{lE}{B})$.

% \begin{figure}[t]
%     \centering
%     \includegraphics[width=\linewidth]{figures/writestall.pdf}
%     \vspace{-4mm}
%     \caption{R, U, and P indicate range lookup, update, and point lookup respectively.}
%     \vspace{-6mm}
%     \label{fig:writestall}
% \end{figure}

% \vspace{1mm}
\noindent{\bf Update.}  
In an LSM-tree, new key-value pairs are first inserted into an in-memory buffer called the MemTable. Once the MemTable reaches its threshold size, it is flushed to disk as a new sorted run. Updates to existing keys are handled using the same out-of-place insertion mechanism, appending the new version without modifying prior entries. When the size of a level exceeds its predefined capacity, a {\it compaction} is triggered to merge its sorted runs with those in the next level.

% \textcolor{red}{TODO:timestamp}
Modern LSM-tree key-value systems execute queries and updates on foreground threads, while use background threads to asynchronously handle the flush and compaction when the MemTable or levels become full.

\subsection{Write Stall Controller}

The write stall controller is a critical component in most LSM-tree-based storage systems, including RocksDB~\cite{rocksdb}, Pebble~\cite{cockroachdb}, Cassandra~\cite{lakshman2010cassandra}, and InfluxDB~\cite{influxdb}. It controls the number of sorted runs at the first level (L0) by deliberately stalling incoming writes when they exceed a configurable threshold to maintain a designated number of sorted runs in the system. When a stall is triggered, the new incoming update will be forced to wait for several microseconds. 
% Existing workload-aware methods also utilize this mechanism to balance the performance of read and write in different workloads.
Existing workload-aware methods~\cite{moose,ruskey,dostoevsky2018,lsmbush} stall writes when the number of sorted runs in the first level (L0) exceeds the predefined maximum in the structural configuration.
\vspace{-3mm}
\subsection{Open Challenges}
Existing approaches such as Wacky, Moose, Dostoevsky, and Ruskey can derive effective LSM configurations for static workloads, but struggle to transition between configurations with both low cost and high responsiveness. As illustrated in Figure~\ref{fig:flex_example}, a read-intensive workload (90\% reads) favors reducing the number of sorted runs from 10 to 1. A {\it greedy transition} adapts quickly but incurs high overhead by stalling writes when $K_1 = 1$. In contrast, the {\it lazy strategy} and Ruskey defer L0 adjustments, reducing transition cost but causing prolonged performance degradation before convergence.

This trade-off stems from transition mechanisms that focus only on reshaping the structure, rather than maintaining performance throughout the transition. Although a configuration may be optimal for a given workload, existing methods do not optimize system behavior during the transition itself. We argue that {\bf sustaining optimal performance under dynamic workloads requires continuously adapting actions to the current workload and system state.}
\newcommand{\cmark}{\Checkmark}
\newcommand{\xmark}{\XSolidBrush}

\newcommand\halfcmark{\cmark\kern-1.3ex\raisebox{1.0ex}{\rotatebox[origin=c]{125}{\textbf{---}}}}

\newcommand{\onestar}{\xmark}
\newcommand{\twostar}{\halfcmark}
\newcommand{\threestar}{\cmark}
\newcommand{\fourstar}{\cmark\cmark}

\begin{figure*}
\vspace{-3mm}
\includegraphics[width=\linewidth]{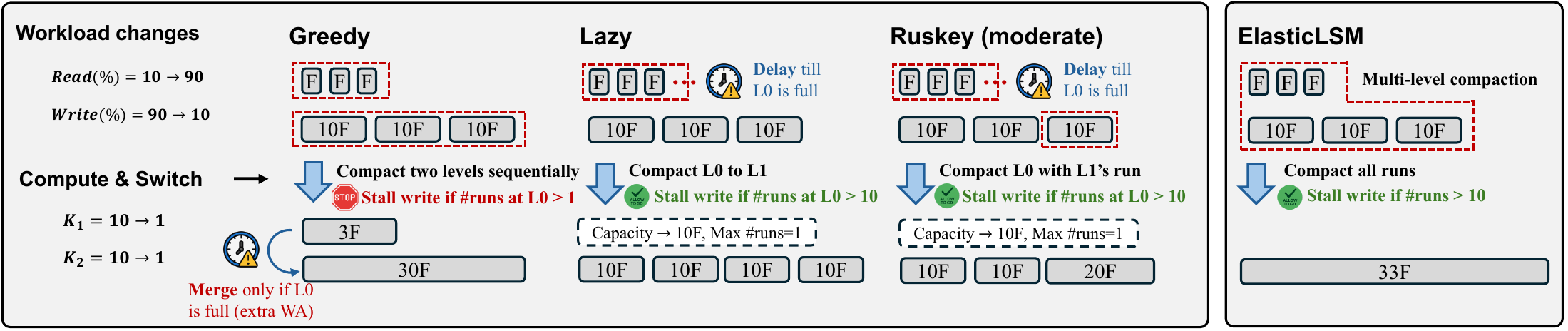}
\scalebox{0.9}{
\begin{tabular}{@{}|r|ccc|ccc|@{}}
\toprule
\multicolumn{1}{|l|}{\multirow{2}{*}{}} & \multicolumn{3}{c|}{Transition Actions}                                                                                                    & \multicolumn{3}{c|}{System Performance}                                                                                                                                                                       \\ \cmidrule(l){2-7} 
\multicolumn{1}{|l|}{}                  & \multicolumn{1}{c|}{\begin{tabular}[c]{@{}c@{}}Write Stall\\ Threshold\end{tabular}} & \multicolumn{1}{c|}{Reduced Runs} & Compacted Bytes & \multicolumn{1}{c|}{\begin{tabular}[c]{@{}c@{}}Transition\\ Overhead\end{tabular}} & \multicolumn{1}{c|}{Responsiveness}            & \begin{tabular}[c]{@{}c@{}}Performance in \\ new workload\end{tabular} \\ \midrule
Greedy                                  & \multicolumn{1}{c|}{\#runs at L0 \textgreater 1}                                     & \multicolumn{1}{c|}{4}            & 33F             & \multicolumn{1}{c|}{\onestar}                                        & \multicolumn{1}{c|}{\threestar} & \threestar                                              \\ \midrule
Lazy                                    & \multicolumn{1}{c|}{\#runs at L0 \textgreater 10}                                    & \multicolumn{1}{c|}{2}            & 10F             & \multicolumn{1}{c|}{\fourstar}                                       & \multicolumn{1}{c|}{\onestar}   & \onestar                                                \\ \midrule
Ruskey                                  & \multicolumn{1}{c|}{\#runs at L0 \textgreater 10}                                    & \multicolumn{1}{c|}{3}            & 20F             & \multicolumn{1}{c|}{\threestar}                                      & \multicolumn{1}{c|}{\twostar}   & \twostar                                                \\ \midrule
{\ourlsm}                              & \multicolumn{1}{c|}{\#runs \textgreater 10}                                          & \multicolumn{1}{c|}{5}            & 33F             & \multicolumn{1}{c|}{\threestar}                                      & \multicolumn{1}{c|}{\fourstar}  & \fourstar                                               \\ \bottomrule
\end{tabular}
}
\vspace{-3mm}
\caption{The example illustrates how existing structural transition policies: Greedy, Lazy, and Moderate (Ruskey), perform during and after the transition, compared with {\ourlsm}’s continuous optimization approach. ``Responsiveness'' denotes the speed at which each method completes the transition. Performance ratings are denoted as follows: \onestar = worst, \twostar = mediocre, \threestar = good, and \fourstar = best.}
\vspace{-6mm}
\label{fig:flex_example}
\end{figure*}

% \subsection{Open Challenges}
% While existing methods such as Wacky, Moose, Dostoevsky, and Ruskey can compute effective LSM configurations for static workloads, they often struggle to handle transitions between configurations with both high responsiveness and low cost.
% As shown in Figure~\ref{fig:flex_example}, the optimal structure for a read-intensive workload with 90\% reads is to reduce the maximum number of sorted runs from 10 to 1. The {\it greedy transition} rapidly adjusts the structure, enabling quick responsiveness to workload changes; however, because $K_1 = 1$ stalls incoming writes, this approach incurs substantial overhead. In contrast, the {\it lazy strategy} and Ruskey delay adjustments to the L0 structure, postponing write stalls and reducing transition overhead. Yet, this slower response causes them to underperform for an extended period until the structure is fully transformed.

% This undesirable trade-off between transition overhead and responsiveness arises because existing transition actions focus solely on morphing the structure itself. While a given structural configuration may be optimal for a specific workload, the transition process does not aim to continuously optimize system performance along the way. We contend that {\bf sustaining optimal performance under dynamic workloads necessitates continuously conducting actions in response to the current workload pattern and system state.}
% \input{figures/3.flexibility}
\vspace{-3mm}
\section{{\oureng}: Adaptive Compaction Decision}
\label{sec:arce}

To achieve this, we first decouple compaction behavior from rigid structural configuration parameters, such as fixed level capacities and prescribed run counts, which traditionally enforce a static and inflexible compaction schedule. We then introduce {\ourlsm},
an enhanced LSM-tree design that enables the system to {\bf compact runs selected across levels} and {\bf initiate compactions at workload-dependent timing}, yielding two principal benefits:
{
\begin{itemize}[leftmargin=*]
    \item {\bf Flexible Run Selection:} Merging sorted runs across multiple levels into one run in a single compaction improves responsiveness to read-intensive workloads and helps reduce write amplification. Also, selectively merging runs within a single level during write-intensive workloads reduces compaction overhead while slightly improving read performance.
    \item {\bf Workload Dependent Timing:} By permitting compaction to be scheduled dynamically in response to workload conditions, the system can delay or advance compactions and write stalls as needed. This flexibility improves adaptability to workload shifts and mitigates the risk of performance bottlenecks under write-heavy scenarios.
\end{itemize}
}
For example, as shown in Figure~\ref{fig:flex_example}, by removing structural configuration parameters, {\ourlsm} can compact all runs across levels in a single operation while simultaneously raising the write stall threshold. This combination avoids transition costs and delivers even better responsiveness than the Greedy approach.

In the following, we first describe how to identify action candidates after removing parameters (Section~\S\ref{sec:unlocked_compaction}), then present a theoretical model of system cost under this setting (Section~\S\ref{sec:compaction_est}) to guide {\oureng} in selecting the most suitable actions over time (Section~\S\ref{sec:decide}).
% {\color{red} here needs to remark in 1-2 sentences why giving such flexibility will bring benefits, for example, allow finer-granularity of LSM-tree morphing?}
% To efficiently evaluate which compaction minimizes system cost under this flexible design, we develop the \underline{\bf A}daptive \underline{\bf R}untime \underline{\bf C}ompaction \underline{\bf E}ngine ({\oureng}). {\oureng} employs a lightweight, score-based evaluation mechanism with a provable theoretical guarantee. By continuously selecting the most effective compaction at each decision point{\color{red} till now it is not that clear how to decide the decision point. If we intend to introduce later, we may give a pointer to readers that we will discuss this later.}, {\oureng} significantly reduces adaptation overhead while maintaining fast responsiveness to workload changes.

% In the following, we first introduce the expanded compaction action space enabled by removing structural constraints in {\ourlsm}. We then present a cost model tailored to {\ourlsm}'s flexible design, followed by the score-based method used to evaluate and select compactions efficiently.

\begin{figure*}
    \centering
    \vspace{-4mm}
    \includegraphics[width=0.97\linewidth]{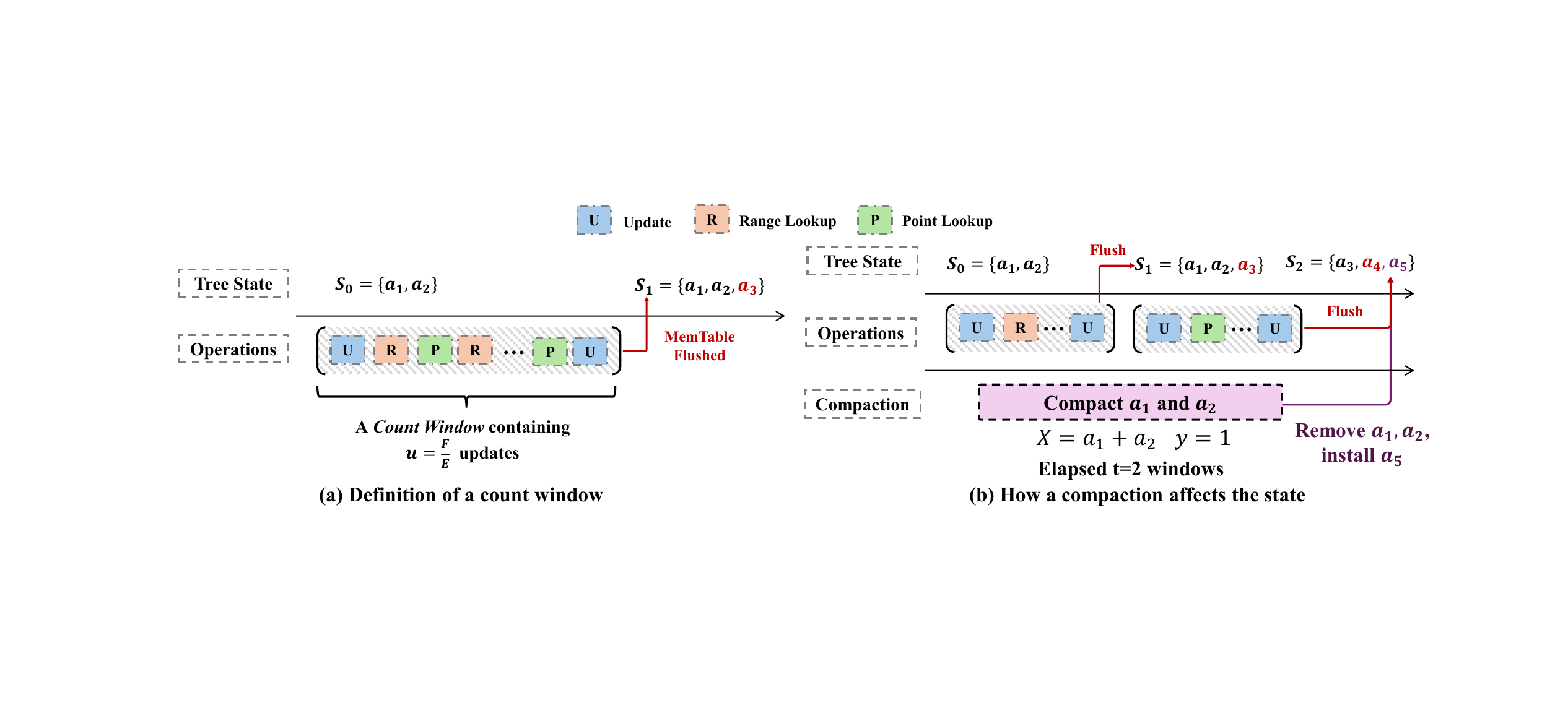}
    \vspace{-4mm}
    \caption{Illustration of how the tree state evolves when the MemTable is flushed or a compaction finishes.}
    \vspace{-4mm}
    \label{fig:cost_est}
\end{figure*}

\subsection{{\ourlsm}: Expanded Action Space}
\label{sec:unlocked_compaction}
{\ourlsm} maintains a collection of sorted runs across levels, each potentially varying in size and count. Without fixed structural parameters on level capacities or maximum run counts, the system must explicitly decide when and how to perform its two core management actions: compaction and write stall. Write stall in {\ourlsm} is straightforward: updates are throttled only when the total number of sorted runs exceeds a tunable threshold $c$, with a stalling rate $k$. This flexibility allows the system to better balance read and write throughput. Both parameters can be tuned independently, as detailed in Section~\S\ref{sec:param_search}. In the following, we elaborate the more complex action -- compaction.

% \noindent{\bf Write Stall.} \textcolor{red}{TODO: move and make it simpler}
% Instead of stalling incoming updates based solely on the number of files at L0, we introduce a new threshold $c$ that triggers write stalls when the total number of sorted runs across all levels exceeds $c$. This approach enables the system to respond more promptly during transitions to read-intensive workloads, as it can throttle writes even when run accumulation occurs at deeper levels rather than just at L0.
% Furthermore, to enhance adaptability across varying workload patterns, our method introduces a tunable write stall penalty rate $k$, which determines how long an update is delayed once the threshold is reached. By allowing both $c$ and $k$ to be adjusted dynamically, the system gains finer-grained control over write stall behavior in response to workload shifts.
\begin{figure}
\centering
  \includegraphics[width=\linewidth]{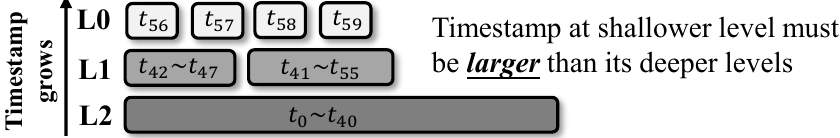}
  \vspace{-6mm}
  \caption{Example of timestamp ranges of different levels.}
  \label{fig:ts_order}
  \vspace{-2mm}
\end{figure}
\noindent{\bf Extensive Compaction Options.}
Any level can contain an arbitrary number of sorted runs of varying sizes after removing structural constraints like level capacities, sorted runs number, and run sizes. However, this flexibility does not imply that we can freely merge any subset of runs. The core requirement of an LSM-tree is to {maintain timestamp
ordering across levels: timestamp at shallower level must be larger than that in deeper levels, while within the same level, sorted runs can have overlapping timestamps (see Section~\S\ref{sec:background}). As shown in Figure~\ref{fig:ts_order}, within the same level, the timestamp ranges of runs can be overlapped while the bottom level must have disjoint and smaller timestamps compared to the top level.
% This constraint is crucial for correctness during point lookups, as the LSM-tree searches level-by-level. If multiple versions of the same key exist at a given level, the system must return the most recent one based on timestamp comparison. Without preserving timestamp order across levels, point lookups would require searching all levels for every key, significantly degrading performance.
Additionally, we restrict compactions to proceed downward, following the LSM-tree tradition, to avoid complicating the timestamp order of runs within a level.
}

Based on these rules, we identify three compaction patterns that produce valid compaction candidate set:

\begin{itemize}[leftmargin=*]
    \item {\bf Pattern 1 (Intra-level)}: Compact any more than one sorted runs at $i$-th level, and place the result to the $i$-th level.
    \item {\bf Pattern 2 (Adjacent-level)}: Compact all the sorted runs from the $i$-th level with zero or more sorted runs at the $(i+1)$-th level, and place the result to the $(i+1)$-th level.
    \item {\bf Pattern 3 (Multi-level)}: Compact all sorted runs from the $i$-th to the $j$-th level ($j>i+1$) with zero or more sorted runs at the $(j+1)$-th level and place the result at the $(j+1)$-th level.
\end{itemize}

In general, Pattern 1 enables intra-level compaction, Pattern 2 performs traditional adjacent-level compaction, and Pattern 3 supports multi-level compaction. While these patterns enable a wide range of compaction candidates, the resulting candidate set can be extremely large and computationally expensive to process exhaustively. To address this, we apply heuristic pruning. Our observation is that, for similar sizes of compacted data, reducing a greater number of sorted runs generally yields better lookup performance. Therefore, for Pattern 1, instead of enumerating all possible combinations of runs within a level, we first sort the runs by sizes in ascending order. We then iteratively build compaction candidates by starting with the smallest run and incrementally adding one more run at a time, continuing until all runs are included. Each intermediate compaction is added to the candidate set. A similar strategy is applied for Pattern 2, where the runs in $(i+1)$-th level are also sorted and incrementally included. For Pattern 3, although a similar incremental approach can be applied, the resulting compaction candidate set can still grow to an enormous size when the total number of levels is large. Therefore, in practice, we typically limit the number of levels to fewer than 8\footnote{The default number of levels in RocksDB is 7.}. By doing this pruning, {\oureng} is able to rapidly find the valid compaction set in 30us in our experiment.

\subsection{System Cost Modeling}
\label{sec:compaction_est}

Since {\ourlsm} greatly expands the action space, it is crucial to understand how different compaction strategies and write stall parameters influence overall performance before making decisions. In traditional LSM-trees, operational costs are straightforward to predict because compactions and stalls follow fixed patterns. In contrast, our flexible design makes cost estimation more challenging, as the tree state (i.e., the sorted runs in the tree) can evolve by more flexible and unpredictable actions. To address this, we introduce a {\it Windowed-State Cost Modeling} method, which partitions the long running operation sequence into multiple state-stable windows, where tree state is generally unchanged. We then estimate the three operational costs within each window and define the rules for state transitions between consecutive windows.

\noindent{\bf Count Window: Maintaining a Stable Tree State.}
\begin{figure}
\includegraphics[width=\linewidth]{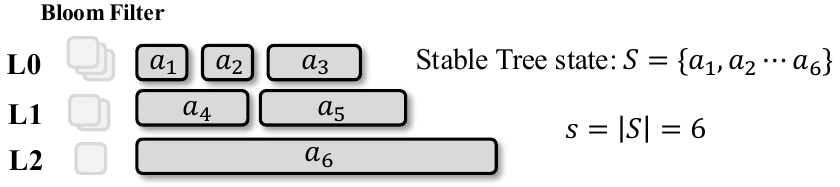} 
\vspace{-6mm}
\caption{An example of an {\ourlsm} within a count window.}
\vspace{-3mm}
\label{fig:stabletree}
\end{figure}
As discussed in Section~\S\ref{sec:background}, both range and point lookup costs depend on the number of sorted runs. In {\ourlsm}, removing structural constraints enables more flexible compactions, but also makes the number of runs highly unpredictable. We observe that the most frequent change in run count occurs when a full MemTable is flushed to the first level (L0), whereas compactions, although they also modify the run count, generally take longer to complete.
Based on this observation, we partition foreground operations into consecutive windows, each containing $u = F/E$ updates, where $F$ is the MemTable size and $E$ is the entry size. We term these {\it count windows} (or simply, windows), inspired by stream processing techniques~\cite{flinkwindows}. The number of range lookup and point lookup within a window are denoted as $r$ and $p$ respectively. And naturally, we can describe the workload pattern by $(r,u,p)$ tuple.
Within a window, we maintain a relatively stable {\it tree state} $S$, defined as the set of sorted runs and their sizes. 
As illustrated in Figure~\ref{fig:stabletree}, the example shows a stable tree state within a window containing six sorted runs of sizes $a_1$ to $a_6$ across three levels.
This stable state allows us to estimate the cost of the three primary operations within the window as follows.

\noindent{\bf Operational Cost in a Window.}
For a point lookup, the LSM-tree may scan up to all $s=|S|$ runs to locate the target key. Each run is equipped with a Bloom filter with false positive rate $\alpha$, so I/O to access a data block is required only when the filter returns a positive result. In the worst case, exactly one run yields a true positive, while the others incur I/O only on false positives with probability $\alpha$. The resulting cost is given in Equation~\ref{eq:point}, where $I_r$ denotes the I/O time to access a data block.

For a range lookup, the system first locates the start position and retrieves the corresponding block from each run, incurring a cost of $s \cdot I_r$. It then sequentially scans $l$ entries (range length) from each run, with I/O cost $lE/B \cdot I_r$, where $E$ is the entry size and $B$ the data block size. Since this scanning phase depends only on $l$ and not on the LSM-tree state, we omit it from subsequent optimization (see Equation~\ref{eq:range}).

For updates, prior methods tie write stalls to compaction, with stall time proportional to compacted bytes as dictated by structural constraints. In contrast, {\ourlsm} decouples compaction from stalling: updates are slowed by a tunable rate $k$ only when $s$ exceeds an independent threshold $c$. The update cost is thus the flush I/O cost plus the stall penalty $k \cdot \mathbb{I}(s>c)$, where $\mathbb{I}$ returns $1$ if $s>c$ and $0$ otherwise, and $I_w$ is the I/O time to write a block (Equation~\ref{eq:update}).

\begin{align}
    \textbf{Point Lookup Cost \quad }P(s) &=(\alpha \cdot s+1)\cdot I_r\label{eq:point}\\
    \textbf{Range Lookup Cost \quad }  R(s) & =s\cdot I_r \label{eq:range}\\
    \textbf{Update Cost \quad }U(s) &={(F/B)}\cdot I_w+k\cdot \mathbb{I}(s>c) \label{eq:update}
\end{align}

\noindent{\bf Evolving Tree State Between Windows.}
Once the cost within a single window is known, estimating the cost of the $i$-th window requires understanding how the run count in the tree state changes from $s_{i-1}$ to $s_i$. Such changes occur through two background actions: MemTable flushes and compactions.
As shown in Figure~\ref{fig:cost_est}(a), flushing a MemTable simply adds a new sorted run of size $a_3$ to the state, yielding $s_{i} = s_{i-1} + 1$. In contrast, compaction alters the tree state more intricately, since its completion time is uncertain and typically not aligned with window boundaries.
To address this, we note that a compaction in the background thread completes when the total I/O time of foreground operations equals (or exceeds) the compaction’s I/O time when having sufficient I/O bandwidth. Specifically, if a compaction of size $X$ bytes starts in the $i$-th window, it will finish in the $(i+t)$-th window, where the cumulative I/O cost of foreground operations over $t$ windows matches the compaction’s I/O time. The foreground I/O time in $t$ windows without other concurrent compactions is given in Equation~\ref{eq:tcost}.
To preserve a stable tree state within each window, we consider the compaction to take effect in the next window after completion. The value of $t$ is computed using Equation~\ref{eq:t}. Experimental results (Figures~\ref{fig:internal_param}(c) and (d)) show that this rounding has minimal impact on the accuracy of theoretical cost model.
\begin{align}
    \label{eq:tcost}
    f(s,t)=\sum_{i=0}^{t-1} r\cdot R(s+i)+u\cdot U(s+i)+ p\cdot P(s+i)
\end{align}

\begin{align}
    \label{eq:t}
    t = &\min \Bigl\{ t \in \mathbb{Z}^+ \,\Big|\, f(s,t)\geq \frac{X}{B}(I_r+I_w) \Bigr\}
\end{align}
\begin{example}
As shown in Figure~\ref{fig:cost_est}(b), for a compaction of size $X$ that removes $y$ sorted runs, if the estimated completion time is after 2 windows, its effect will be applied in the third window by removing the compacted runs (e.g., $a_1$ and $a_2$) and installing the result (e.g., $a_5$).
\end{example}
The number of sorted runs of windows evolves by:

\begin{align}
    s_{i+1}=\begin{cases}
        s_i+1,&\text{No compaction completes at i+1 window}\\
        s_i+1-y,&\text{Compaction reducing y runs completes}
    \end{cases}
\end{align}
\subsection{{\oureng}: Decide the Intermediate Compaction}
\label{sec:decide}

\noindent{\bf Objective Function.}
% The goal of compaction and write stalling is to minimize the average operational cost under the current workload. We formulate this as the following optimization problem:
Based on the cost model and state-evolution rules defined above, we can express the average cost for a given workload $(r,u,p)$ with stall parameters $c$ and $k$, after performing $m$ compactions, as:

\begin{align}
\label{eq:cost}
C = \frac{\sum_{i=1}^{m} f(s_i, t_i)}{\sum_{i=1}^{m} t_i (r + u + p)}
\end{align}

subject to the update rule:
\begin{align}
s_{i+1} = s_i + t_i - y_i
\end{align}
% Here, $C$ denotes the average cost per operation after performing a sequence of $m$ compactions.
Here, each compaction has $X_i$ bytes, reduces $y_i$ sorted runs, spans over $t_i$ count windows, and starts with $s_i$ sorted runs. The function $f(s_i, t_i)$ represents the cumulative cost over window $t_i$, as defined earlier.
This problem is fundamentally a search problem to find out $m$ compaction to minimize the average cost. Evaluating only short-term compaction candidates (i.e., small $m$) is computationally efficient, but tends to favor smaller, quickly completed compactions and larger stall thresholds $c$, which yield short-term benefits by reducing run count more rapidly. However, such strategies may overlook larger compactions that, although expensive upfront, offer substantial long-term benefits. For example, merging two 40GiB runs may yield sustained lookup improvements for the next 40GiB of inserted data.
Exploring deeper compaction sequences to capture these long-term gains introduces significant computational overhead and can be proven to be NP-hard. Formal proof is provided in Section~\ref{sec:proof-details}.
\begin{lemma}
\label{lmm:nphard}
Deciding $m$ compactions to minimize Equation~\ref{eq:cost} is NP-hard.
\end{lemma}
% \begin{proof}[Proof Sketch]
% Since real-world workloads are changing, we can consider the workload composition in the $i$-th window as $(r_i,u_i,p_i)$, and the $(r,u,p)$ in the denominator of Equation \ref{eq:cost} should correspond to their window's $(r_i,u_i,p_i)$. Now set $r_i=u_i=p_i=0$ for $i>3$, which means we only need to consider the first three windows' costs. By this we limit our problem to a special case, and we can prove this case an NP-hard problem by constructing an initial LSM-tree state and specific workloads and reducing it from the NP-complete Equal-Cardinality Partition problem. Therefore, our problem is also NP-hard.
% \end{proof}
\begin{figure}
    \centering
    % \begin{subfigure}{0.45\linewidth}
    % \includegraphics[width=1.\linewidth]{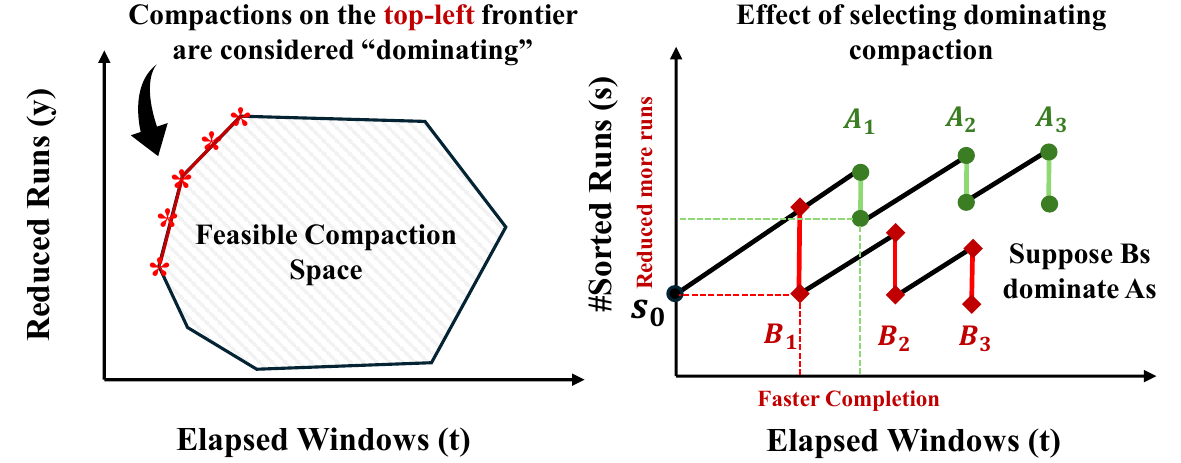}
    % \end{subfigure}
    % \begin{subfigure}{0.45\linewidth}
    % \includegraphics[width=1.\linewidth]{figures/dominance.pdf}
    % \end{subfigure}
    % \caption{\textcolor{red}{TODO: illustrate why dominating compactions are better.}}
    % \label{fig:pareto}
    \includegraphics[width=\linewidth]{figures/pareto.pdf}
    \vspace{-6mm}
    \caption{The left panel defines dominating compactions, while the right panel illustrates their benefits.}
    \vspace{-3mm}
    \label{fig:pareto}
\end{figure}

Fortunately, it is unnecessary to determine the full sequence of $m$ compactions in advance.
Instead, we only need to identify the first compaction to execute at each decision point. This raises a key question: Can we design a principled method to quantify both the short-term penalty and long-term benefit of an intermediate compaction candidate?

\noindent{\bf Short-Term Effect.}
The immediate drawback—or penalty—of executing a compaction is that it occupies a background compaction worker, potentially causing SSTs to accumulate at L0. This accumulation can degrade read performance and even trigger a write stall. We model the short-term cost as:
\begin{align}
\label{eq:harm}
E_s(s, t) &= \overbrace{I_r \cdot t \cdot (r + \alpha \cdot p)}^{\text{Read slowdown}} + \underbrace{uk \cdot \max\big(0, s + t - c\big)}_{\text{Write stall penalty}}
\end{align}
Here, $t$ denotes the estimated duration of compaction with sorted runs $s$ in the system.

\noindent{\bf Long-Term Effect.}
Compaction reduces the number of sorted runs, which benefits all future reads within the current decision window. We define the long-term benefit of a compaction that reduces $y$ sorted runs as:
\begin{align}
E_l(y) = (r + \alpha \cdot p) \cdot I_r \cdot y
\end{align}

\noindent{\bf Effectiveness Score.}
By integrating both effects, we define the overall effectiveness of a compaction spanning $t$ windows and reducing $y$ runs as:
\begin{align}\label{eq:effect}
E(s,t,y) = M \cdot E_l(y) - E_s(s,t)
\end{align}
The parameter $M$ scales the long-term benefit and is determined by the current tree state, workload characteristics, and write stall threshold. Section~\ref{sec:param_search} provides guidance on selecting the appropriate $(M,c,k)$ under different scenarios. Given a fixed $(M,c,k)$, {\oureng} can select the compaction with the highest effective score among many compaction candidates.

\noindent{\bf Optimality Analysis.}
The effectiveness score not only significantly improves the efficiency of compaction selection but also reveals an important structural property among compaction candidates—domination. Formally, we say that compaction $A$ dominates compaction $B$ (denoted as $B \prec A$) if and only if $A$ reduces more sorted runs as $B$ while requiring less compaction time.
Using the score-based evaluation defined in Equation~\ref{eq:effect}, {\oureng} ensures that only non-dominated candidates are selected under any given parameter configuration $(M, c, k)$. We refer to these as {\it dominating compactions}, which collectively form the left frontier in a two-dimensional space, where the x-axis represents elapsed time $t$ and the y-axis represents the number of reduced sorted runs $y$, as illustrated in Figure~\ref{fig:pareto}.

\begin{lemma}
    \label{lemma:domination}
    If $A\prec B$, the effectiveness score of $A$ is less than $B$.
\end{lemma}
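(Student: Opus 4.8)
The plan is to derive the lemma from two elementary monotonicity properties of the effectiveness score $E(s,t,y)=M\cdot E_l(y)-E_s(s,t)$ in Equation~(\ref{eq:effect}). The enabling observation is that a dominating compaction and the one it dominates are scored at the \emph{same} decision point, hence against a common current run count $s$ and common parameters $(r,u,p,\alpha,c,k,M,I_r)$; the only arguments in which the two candidates differ are the number of reduced runs $y$ and the estimated duration $t$. The whole comparison therefore collapses to the behavior of $E$ as a function of $(t,y)$ alone.

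First I would verify the two monotonicities. The long-term term $E_l(y)=(r+\alpha p)\,I_r\,y$ is non-decreasing in $y$ because its coefficient $(r+\alpha p)\,I_r\ge 0$. The short-term term $E_s(s,t)=I_r\,t\,(r+\alpha p)+uk\cdot\max\!\big(0,\,s+t-c\big)$ from Equation~(\ref{eq:harm}) is non-decreasing in $t$: its first summand is linear in $t$ with non-negative slope, and its second summand is the composition of the non-decreasing map $t\mapsto s+t-c$ with the non-decreasing function $\max(0,\cdot)$. Since $E$ adds a positive multiple ($M>0$) of $E_l$ and subtracts $E_s$, it follows that $E$ is non-decreasing in $y$ and non-increasing in $t$.

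Second I would feed in the domination relation. Write $(t_1,y_1)$ for the dominating compaction and $(t_2,y_2)$ for the one it dominates, so that $y_1\ge y_2$ (at least as many runs reduced) and $t_1<t_2$ (strictly faster). Expanding the score difference gives
\[
\Delta = M\bigl(E_l(y_1)-E_l(y_2)\bigr)+\bigl(E_s(s,t_2)-E_s(s,t_1)\bigr),
\]
where $y$-monotonicity makes the first bracket $\ge 0$ and $t$-monotonicity makes the second bracket $\ge 0$, so $\Delta\ge 0$. Strictness is then pinned down by the read-slowdown piece $I_r\,t\,(r+\alpha p)$ of $E_s$: whenever the workload carries any lookup traffic, i.e. $r+\alpha p>0$, the strict gap $t_1<t_2$ forces the second bracket to be strictly positive, so $\Delta>0$ even when $y_1=y_2$. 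Hence the dominated compaction receives a strictly smaller score and can never be chosen, for every configuration $(M,c,k)$ — matching the paper's guarantee that only non-dominated candidates are selected.

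The only genuinely delicate step, and where I would spend the care, is the kink of $\max(0,\cdot)$ in the write-stall penalty: I must confirm that monotonicity in $t$ survives the transition across $s+t=c$. It does, because each of the two linear pieces has non-negative slope and they meet continuously at the kink, so no case split on the sign of $s+t-c$ is needed and the bound $E_s(s,t_2)\ge E_s(s,t_1)$ holds uniformly. Everything else is arithmetic that is manifestly independent of the particular values of $M$, $c$, and $k$, which is precisely what makes the dominating frontier a configuration-free object.
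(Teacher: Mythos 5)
Your proof is correct and takes essentially the same route as the paper's: decompose the score difference into its long-term ($E_l$) and short-term ($E_s$) parts, observe that dominating in $y$ makes the first difference non-negative and dominating in $t$ makes the second non-negative, and extract strictness from the read-slowdown term $I_r\,t\,(r+\alpha p)$. You also resolve, in the same way the paper's own proof implicitly does, the paper's inconsistent orientation of $\prec$ (the text defines $A\prec B$ as ``$A$ dominates $B$,'' yet the lemma and its proof treat the dominating candidate as the one with the \emph{higher} score), and your explicit check that monotonicity survives the kink of $\max(0,\,s+t-c)$ is, if anything, more careful than the paper's corresponding bound.
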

\begin{proof}
    The long-term effect of $B$ will increase by $(r+\alpha\cdot p)\cdot I_r\cdot (y_2-y_1)$, while the short-term effect of $B$ will decrease by at least $I_r\cdot (t_1-t_2)\cdot (r+\alpha\cdot p)+uk\cdot \max(0,t_1-t_2+s-c)$. Therefore, the effectiveness score of $B$ is larger than $A$.
\end{proof}
In Section~\ref{sec:proof-details}, we show that every dominating compaction can be selected by some $(M, c, k)$, and at least part of the dominating compaction must occur in the optimal sequence.

\subsection{Parameter Selection}
\label{sec:param_search}
\begin{algorithm}[t]
    \caption{FindBestParams($M,c,k$)}
    \label{alg:simulate}
    \KwIn{Current tree state $S$ and workload $(r, u, p)$}
    \KwOut{Best parameters $(M,c,k)$}

    bestCost $\leftarrow$ $\infty$ \;
    bestM, bestc, bestk $\leftarrow$ null \;

    \ForEach{valid $(M,c,k)$}{
        totalCost $\leftarrow$ 0 \;
        $S' \leftarrow S$ \;
        
        \For{$i \leftarrow 0$ \KwTo MaxIterTime}{
            Select compaction reducing $y$ runs and spanning $t$ windows based on $(M, c, k)$ \;
            totalCost $\leftarrow$ totalCost + f(|S'|,t)\;
            totalOps $\leftarrow$ totalOps + $t\cdot (r+u+p)$\;
            $S'$ $\leftarrow$ $S'$ removes compacted runs and installs result \;
        }
        
        avgCost $\leftarrow$ totalCost / totalOps
        
        \If{avgCost $<$ bestCost}{
            bestM $\leftarrow M$ \;
            bestc $\leftarrow c$ \;
            bestk $\leftarrow k$ \;
            bestCost $\leftarrow$ avgCost \;
        }
    }

    \Return{$(bestM, bestc,bestk)$}
\end{algorithm}

Achieving approximately optimal compaction selection requires properly setting the parameters. However, determining the optimal values of the three parameters $(M, c, k)$ over time is itself an NP-hard problem. Fortunately, it is not necessary to determine all parameters simultaneously. Instead, we only need to ensure that the parameter values chosen at each decision point are suitable, and we can update them periodically as the system evolves.
To this end, we adopt a simple yet effective simulation-based approach. We iteratively explore a wide range of candidate $(M, c, k)$ combinations and evaluate their effectiveness by simulating continuous compaction decisions. For each configuration, we estimate the average system cost over a sufficiently long period. The parameter set yielding the lowest cost is then selected. This process is detailed in Algorithm~\ref{alg:simulate}.

The underlying intuition is that when the tree state (e.g., total data volume and number of sorted runs) and the workload remain relatively stable, there exists a tuple of parameters $(M, c, k)$ that can continuously guide the selection of the most suitable compactions to minimize system cost. A new parameter tuple is required only when any of them varies beyond a predefined recomputing threshold $d$ ($d \in (0,1)$). In our implementation, we use $d=0.1$, which strikes a balance between simulation overhead and responsiveness, ensuring satisfactory performance without frequent re-selection. A detailed evaluation of this threshold is provided in Section~\S\ref{sec:evaluate}.
To further reduce simulation time, we employ several optimization techniques, including candidate pruning and multi-threaded computation, as described in Section~\S\ref{sec:arcekv}.
\vspace{-3mm}
\section{{\ourkv}: Workload-driven KV Store}
\label{sec:arcekv}
\begin{figure}[t]
    \centering
    \includegraphics[width=\linewidth]{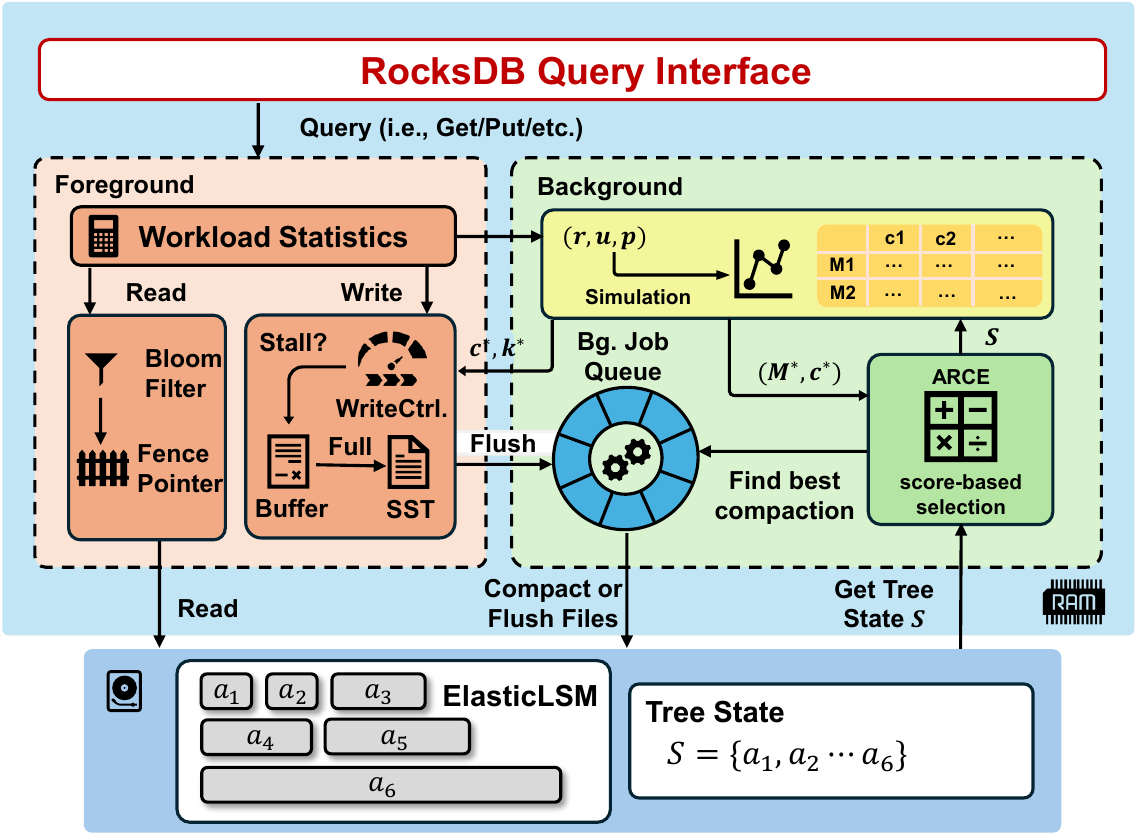}
    \vspace{-8mm}
    \caption{Overview of the architecture of {\ourkv}}
    \vspace{-3mm}
    \label{fig:arch}
\end{figure}

As shown in Figure~\ref{fig:arch}, {\ourkv}, built upon {\oureng} introduced earlier, incorporates four key components: Workload Statistics, {\ourlsm}, Background Simulation, and {\oureng}’s score-based compaction selection. The Workload Statistics module monitors operations and reports windowed counts $(r, u, p)$ every one million operations. The Background Simulation module implements Algorithm~\ref{alg:simulate}, outputting $c$ and $k$ to RocksDB’s write controller to manage write stalls, and forwarding $M$, $c$, and $k$ to {\oureng}’s score-based selection module to determine optimal compactions according to Equation~\ref{eq:effect}. {\ourlsm} manages SSTables organized into sorted runs and levels, which {\oureng} dynamically compacts based on these decisions. All components are implemented within RocksDB. The following section details the implementation.

% As shown in Figure~\ref{fig:arch}, built on {\oureng}, {\ourkv} consists of both foreground and background components. In the foreground, the Workload Statistics module tracks operations and reports window counts $(r, u, p)$ every 1,000,000 operations. The Adaptive Write Controller (WriteCtrl) decides whether to stall writes with threshold $c$, and limit the write speed with penalty rate $k$. In the background, a worker handles flushes and compactions. The compaction enumeration and selection are integrated into the {\oureng} Picker, which implement the \texttt{CompactionPicker} interface in RocksDB and periodically evaluates candidates and submits the highest-scoring compaction. The Parameter Searcher (Param. Searcher) runs in a separate thread and activates only when significant workload or tree state changes occur, recalculating $(M,c,k)$ as needed. 
% In the following, we will reveal some implementation details about efficiently running the simulation and cost estimation in multi-threading scenario.

\noindent{\bf Parallel Simulation.} Unlike tree traversal, the score-based simulation is inherently parallelizable, as each $(M, c,k)$ tuple can be evaluated independently. Distributing the computation across multiple threads can significantly accelerate the algorithm. By default, {\ourkv} uses 16 threads to run these simulations, which complete well within a single window. Additionally, the effectiveness scores are computed through linear transformations of compaction size $X$, reduced runs $y$, and elapsed windows $t$. This structure allows us to vectorize the score computation across all compaction candidates using the \texttt{Eigen} library~\cite{eigen}. \texttt{Eigen} applies SIMD (Single Instruction, Multiple Data) optimizations under the hood, further improving simulation efficiency.

\noindent{\bf Parameter Pruning.}
To reduce computational overhead, we adopt coarse-grained parameter tuning. For $M$, we use a step size of 5, as nearby values yield similar compaction choices; for $c$, we use a step size of 2, since closely spaced $(M, c)$ pairs produce comparable results. We also set upper bounds: $M$ is capped at the smallest value selecting the compaction with the largest reduction in sorted runs (upper-right candidate in Figure~\ref{fig:pareto}(a)), and $c$ is limited to less than $4\times$ the current run count, as exceeding this should already trigger re-selection based on the change threshold in the previous section.
For the write stall penalty $k$, performance changes significantly only when it is doubled or halved. Thus, we initialize $k = 6$ (RocksDB’s default stall rate) and test two additional values by successive doubling, as finer granularity provides diminishing returns.
Finally, we cap the simulation iterations for each parameter tuple at \texttt{MaxIterTime} (typically 400) to ensure completion before the tree state drifts, while keeping the duration long enough to capture long-term benefits. Under a balanced workload ($r = u = p$), a window lasts over 200 ms and simulation completes within 150ms, keeping the system responsive during parameter selection.

\noindent{\bf Multi-threading Extension.}
The cost model in Section~\S\ref{sec:compaction_est} assumes a single foreground thread and one background compaction worker, whereas real systems typically execute multiple foreground threads concurrently. Although the per-operation I/O cost remains unchanged, parallel execution reduces the number of elapsed compaction windows. Following Cosine~\cite{chatterjee2021cosine}, we model this effect using Amdahl’s Law~\cite{Amdahllaw}. With $\eta$ cores and parallelizable fraction $\phi$, the speedup is $g=\frac{1}{1-\phi(1-1/\eta)}$, yielding an adjusted compaction duration $t' = t/g$. Based on profiling, we set $\phi=0.5$.

To support multiple background workers, {\ourkv} tracks available compaction threads. When multiple workers are idle, newly flushed SSTables are assumed to compact immediately and no penalty is applied; otherwise, a delay penalty is introduced. The system also monitors thread, memory, and I/O utilization, applying a large penalty when resources become saturated.

\vspace{-1mm}
\section{Evaluation}
\label{sec:evaluate}
\newcommand{\workloadfirst}{Workload~I}
\newcommand{\workloadsecond}{Workload~II}
\newcommand{\workloadthird}{Workload~III}

\begin{table}[t]
    \centering
    \caption{Operation Ratios Composition}
    \vspace{-4mm}
    \begin{tabular}{rllllllllll}
    \hline
    \, & A & B & C & D & E & F & G & H & I & J\\ \hline
    range(\%) & 98 & 1 & 1 & 49 & 2 & 49 & 40 & 40 & 20 & 33\\
    update(\%)     & 1 & 98 & 1 & 2 & 49 & 49 & 40 & 20 & 40 & 33\\
    point(\%) & 1 & 1 & 98 & 49 & 49 & 2 & 20 & 40 & 40 & 33\\
    \hline
    \end{tabular}
    \vspace{-3mm}
    \label{tab:workload}
\end{table}
% This section presents the experimental evaluation of {\ourkv}, comparing its performance against state-of-the-art compaction strategies, including Leveling~\cite{leveldb}, Tiering~\cite{lakshman2010cassandra}, LazyLeveling~\cite{dostoevsky2018}, Ruskey~\cite{ruskey}, and Moose~\cite{moose}, as well as widely adopted industrial LSM-based key-value stores such as Pebble~\cite{cockroachdb}, WiredTiger~\cite{wiredtiger}, and Cassandra~\cite{lakshman2010cassandra}\textcolor{red}{TODO: choice of baseline}. All experiments are conducted on a machine equipped with an Intel Core i9-13900K CPU (5.40GHz), 128GB of RAM, and a 1TB NVMe SSD, running 64-bit Ubuntu 22.04 with an ext4 file system. To simulate realistic deployment scenarios, where not all system memory is allocated to RocksDB (e.g., TiKV recommends allocating 70\%~\cite{tikv}), We follow Disco~\cite{disco} in using the cgroup command to cap total memory usage at 75GiB.

This section presents the experimental evaluation of {\ourkv}. All experiments are conducted on a machine equipped with an Intel Core i9-13900K CPU (5.40GHz), 128GB of RAM, and a 1TB NVMe SSD, running 64-bit Ubuntu 22.04 with an ext4 file system. To simulate realistic deployment scenarios, where not all system memory is allocated to RocksDB (e.g., TiKV recommends allocating 70\%~\cite{tikv}), We follow Disco~\cite{disco} in using the cgroup command to cap total memory usage at 75GiB. In terms of baselines, we include both widely adopted compaction strategies, such as Leveling~\cite{leveldb}, 1-Leveling~\cite{rocksdb}, and Tiering~\cite{lakshman2010cassandra}, and recently published academic state-of-the-art LSM-tree systems with available artifacts, including LazyLeveling~\cite{dostoevsky2018}, Moose~\cite{moose}, Ruskey~\cite{ruskey}, and CAMAL~\cite{camal}. These systems are all capable of handling point lookups, range queries, and updates issued in random order and speed using a unified handler thread.
To ensure fair comparison and practical generalizability, we adopt RocksDB’s default configuration, using one background compaction thread and one flush thread. Additionally, to broaden the evaluation of effectiveness, we include industrial-grade systems such as Pebble~\cite{cockroachdb}, WiredTiger~\cite{wiredtiger}, and Cassandra~\cite{lakshman2010cassandra} as baseline

\begin{figure*}
    \centering
    \vspace{-2mm}
    \includegraphics[width=0.97\textwidth]{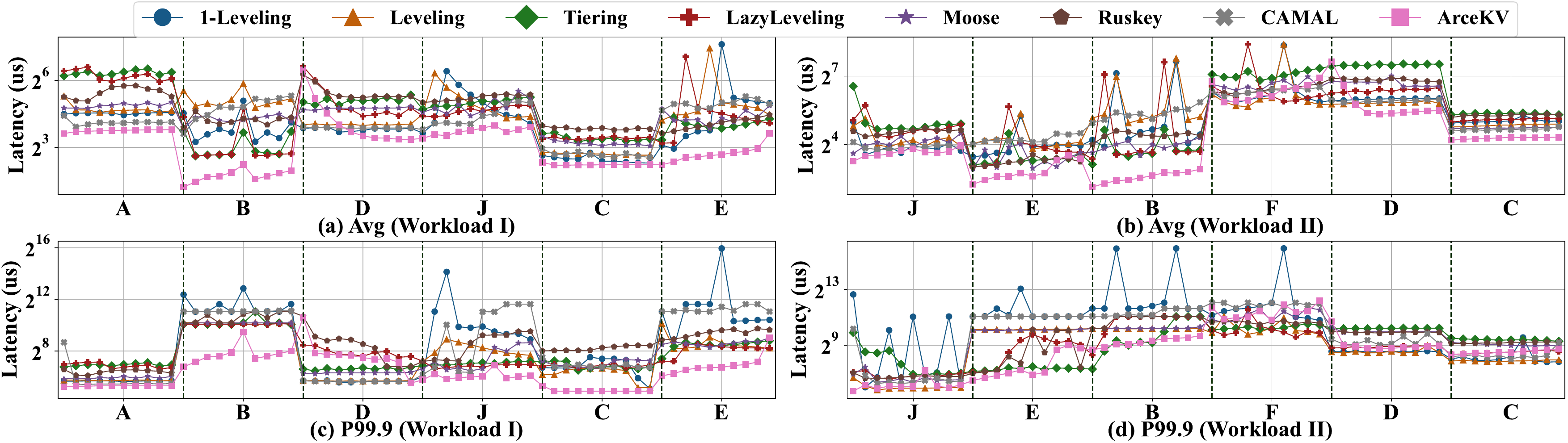}
    \vspace{-3mm}
    \caption{Average latencies and P99.9 latencies for all the methods under {\workloadfirst} and {\workloadsecond}.}
    \vspace{-2mm}
    \label{fig:prim_workload_test}
\end{figure*}
\begin{figure*}
\vspace{-2mm}
\begin{minipage}[b]{.4\linewidth}
\hspace{-3mm}
\raisebox{40mm}{
    \centering
    \small
    \scalebox{0.75}{
    \renewcommand{\arraystretch}{0.95}
    \begin{tabular}{|l|ccccccc|}
\hline
                       & \multicolumn{7}{c|}{\textbf{Workload I}}                                                                                                                                               \\ \cline{2-8} 
\multicolumn{1}{|c|}{} & A                    & B                     & D                     & J                    & C                    & \multicolumn{1}{c|}{E}                    & AvgTput              \\ \hline
1-L                    & 3.52x                & 2.29x                 & {\ul \textbf{3.20x}}  & 1.14x                & 2.60x                & \multicolumn{1}{c|}{1.04x}                & 1.45x                \\
Lvl                    & 3.07x                & 1.00x                 & 2.78x                 & 1.11x                & 2.21x                & \multicolumn{1}{c|}{1.00x}                & 1.19x                \\
Tier                   & 1.00x                & 4.21x                 & 1.31x                 & 1.16x                & 1.35x                & \multicolumn{1}{c|}{2.50x}                & 1.03x                \\
LL                     & 1.08x                & 3.65x                 & 1.20x                 & 1.47x                & 1.35x                & \multicolumn{1}{c|}{1.33x}                & 1.00x                \\
MSE                    & 2.76x                & 1.85x                 & 1.68x                 & 1.26x                & 1.57x                & \multicolumn{1}{c|}{1.91x}                & 1.40x                \\
RKY                    & 1.82x                & 1.92x                 & 1.00x                 & 1.00x                & 1.00x                & \multicolumn{1}{c|}{2.39x}                & 1.07x                \\
CAM                    & 4.65x                & 1.25x                 & 3.07x                 & 1.68x                & 2.23x                & \multicolumn{1}{c|}{1.29x}                & 1.57x                \\
\textbf{OURS}          & {\ul \textbf{6.04x}} & {\ul \textbf{10.60x}} & 1.96x                 & {\ul \textbf{2.81x}} & {\ul \textbf{3.08x}} & \multicolumn{1}{c|}{{\ul \textbf{5.89x}}} & {\ul \textbf{2.92x}} \\ \hline
                       & \multicolumn{7}{c|}{\textbf{Workload II}}                                                                                                                                              \\ \cline{2-8} 
\multicolumn{1}{|c|}{} & J                    & E                     & B                     & F                    & J                    & \multicolumn{1}{c|}{C}                    & AvgTput              \\ \hline
1-L                    & 2.18x                & 1.18x                 & 1.26x                 & 1.45x                & 2.84x                & \multicolumn{1}{c|}{1.24x}                & 1.53x                \\
Lvl                    & 1.74x                & 1.00x                 & 1.00x                 & 1.53x                & {\ul \textbf{3.19x}} & \multicolumn{1}{c|}{1.42x}                & 1.51x                \\
Tier                   & 1.00x                & 1.72x                 & 4.34x                 & 1.00x                & 1.00x                & \multicolumn{1}{c|}{1.00x}                & 1.00x                \\
LL                     & 1.23x                & 1.18x                 & 1.48x                 & 1.37x                & 2.17x                & \multicolumn{1}{c|}{1.20x}                & 1.37x                \\
MSE                    & 1.99x                & 1.76x                 & 4.10x                 & 1.39x                & 1.73x                & \multicolumn{1}{c|}{1.54x}                & 1.53x                \\
RKY                    & 1.45x                & 2.00x                 & 3.00x                 & 1.59x                & 1.58x                & \multicolumn{1}{c|}{1.05x}                & 1.42x                \\
CAM                    & 2.30x                & 1.01x                 & 1.54x                 & 1.67x                & 2.99x                & \multicolumn{1}{c|}{1.61x}                & 1.72x                \\
\textbf{OURS}          & {\ul \textbf{2.74x}} & {\ul \textbf{2.60x}}  & {\ul \textbf{10.62x}} & {\ul \textbf{1.64x}} & 2.79x                & \multicolumn{1}{c|}{{\ul \textbf{2.10x}}} & {\ul \textbf{2.17x}} \\ \hline
\end{tabular}}}
% \end{table}
\end{minipage}%
\begin{minipage}[b]{.5\linewidth}
\hspace{1mm}
\vspace{15mm}
\includegraphics[width=1.\linewidth]{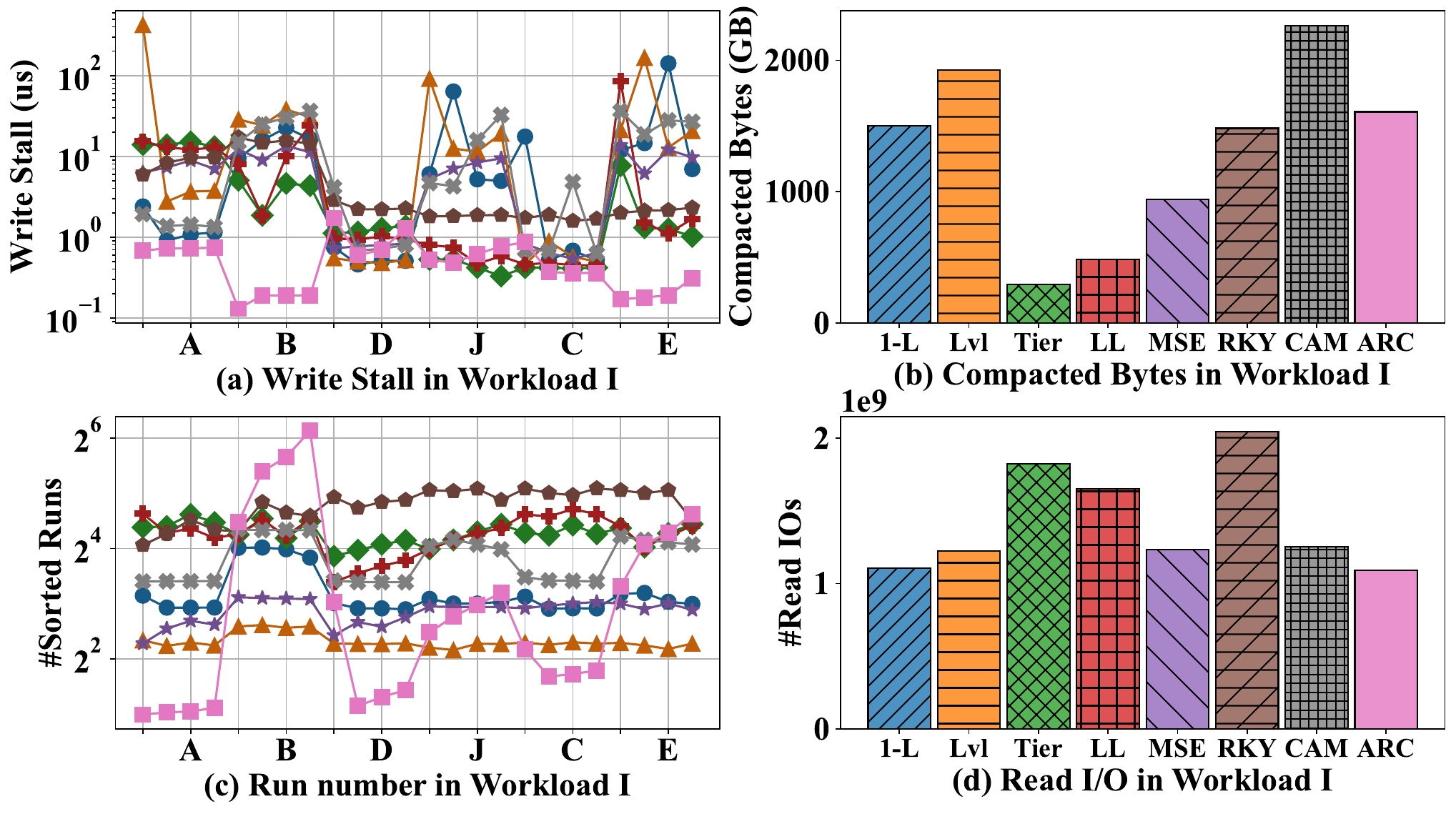}
\end{minipage}
\vspace{-18mm}
\caption{The table (left) shows the normalized throughput of each sub-workload in {\workloadfirst} and {\workloadsecond}; The figures (right) presents the change of write stall, total compaction bytes, the change of sorted runs, and the total read I/O when executing {\workloadfirst}.}
\vspace{-6mm}
\label{fig:workload_details}
\end{figure*}

\noindent{\bf Baselines.} The following systems and compaction strategies are used as baselines:
\begin{itemize}[leftmargin=*]
    \item {\bf Leveling (abbr. Lvl)}: Maintains at most one sorted run per level and increases level capacity using a fixed size ratio $T$. This policy is optimal for read-intensive workloads.
    \item {\bf Tiering (abbr. Tier)}: Allows up to $T$ sorted runs per level, also growing capacity by size ratio $T$. It is designed to favor write-intensive workloads by minimizing compaction overhead.
    \item {\bf 1-Leveling (abbr. 1-L)}: The default compaction style in RocksDB. Unlike traditional Leveling, it allows up to 20 sorted runs at the first level, making it particularly effective for read-heavy workloads and also adaptive to workloads with a portion of writes.
    \item {\bf LazyLeveling (abbr. LL)}: Structurally similar to Tiering but maintains only one sorted run at the largest level. This hybrid design improves performance for mixed read-write workloads.
    \item {\bf Moose (abbr. MSE)}: Leverages a dynamic programming algorithm to configure an LSM-tree structure that achieves an optimal balance among point lookups, range queries, and updates based on the given workload.
    \item {\bf Ruskey (abbr. RKY)}: Uses a reinforcement learning (RL) model to guide structural transitions, reducing the overhead of adapting to new workloads. Ruskey fixes the size ratio at $T=10$, while the number of sorted runs at each level is determined by the RL policy.
    {\item {\bf CAMAL (abbr. CAM)}: Employs active learning to optimize structural parameters, such as size ratio and the number of sorted runs per level as well as the allocation of memory to Bloom filter, cache, and Memtable.}
\end{itemize}
\noindent{\bf Implementation of Baselines.}
To ensure a fair comparison, all compaction policies except Ruskey\footnote{The implementation of Ruskey was obtained from the original authors. It is built on RocksDB.} and CAMAL\footnote{\url{https://github.com/NTU-Siqiang-Group/CAMAL}} are implemented on top of the Moose codebase\footnote{\url{https://github.com/NTU-Siqiang-Group/MooseLSM}}, a framework based on RocksDB that supports configurable numbers of sorted runs and level capacities. Leveling, Tiering, and LazyLeveling are implemented using this framework, each configured with a size ratio of $T=10$.
For Moose, the size ratios and the number of sorted runs per level are determined by its dynamic programming algorithm. In our evaluation, we set these values to 40, 40, 41 for size ratios and 6, 6, 6 for sorted runs, ensuring they accommodate the total number of entries to be ingested. {\bf {\ourkv} (abbr. ARC)} implements on top of RocksDB and sets the total level to 4, the I/O cost $I_w=15$us and $I_r=12$us based on the system profiling result.
We use a Bloom filter with 10 bits-per-key for all baselines, following RocksDB’s default implementation. Keys are fixed at 24 bytes, and values at 1,000 bytes and the write buffer is 2MB which is consistent with configurations commonly used in prior studies~\cite{ruskey,moose,dostoevsky2018}.

\noindent{\bf Dynamic Workload Generation.}
To construct realistic dynamic workloads, we primarily follow the approach used in Ruskey, which involves combining multiple sub-workloads with varying read-write ratios. To further enhance flexibility and realism, we distinguish between point lookups and range lookups when composing these workloads. We adopt several typical workload configurations evaluated in Endure~\cite{huynh2021endure} and Moose~\cite{moose}, with their compositions summarized in Table~\ref{tab:workload}. We define two primary compound workloads:
\begin{itemize}[leftmargin=*]
    \item {\bf \workloadfirst}: This compound workload consists of A, B, D, J, C, and E, with each sub-workload containing 40,960,000 operations, totaling 245,760,000 operations. It inserts approximately 74GiB of new data.
    \item {\bf \workloadsecond}: This workload comprises J, E, B, F, D, and C, with 40,960,000 operations per sub-workload, totaling 245,760,000 operations and inserting 94GiB new data.
\end{itemize}
The workload ratios in {\workloadfirst} shift more abruptly—from a high write ratio to a low one—while {\workloadsecond} exhibits a more gradual transition, with the write ratio slowly decreasing over time. %{\workloadsecond}'s pattern closely resembles the workload settings evaluated in Ruskey.
For multi-threaded evaluation and comparison with industrial KV stores, we use {\workloadthird}, composed of G, H, and I workloads, each with 20,480,000 operations. This setup avoids extreme read- or write-heavy cases that are less reflective of production scenarios.
% We also test performance under a gradually shifting workload that changes more rapidly than the workload statistics reporting interval (1,000,000 operations). This synthetic workload comprises 49,152,000 operations with varying ratios of range lookups and updates, modulated by a cosine function.
Before running any test workload, we preload the system by sequentially writing 40GiB of data (about 40 million entries) to ensure a stable compaction state, enabling fair evaluation of write-friendly strategies like Tiering and LazyLeveling.

% \begin{figure*}[t]
%     \centering
%     \includegraphics[width=\linewidth]{figures/exp2_final.pdf}
%     \vspace{-7mm}
%     \caption{(a) presents our performance on multi-threading environment. (b) compares our performance with other prevalent LSM-base key-value store; (c) showcases our capability on continuously changing workload.}
%     \vspace{-5mm}
%     \label{fig:exp2}
% \end{figure*}
\begin{figure*}
\vspace{-5mm}
\begin{minipage}[b]{.5\linewidth}
\includegraphics[width=\linewidth]{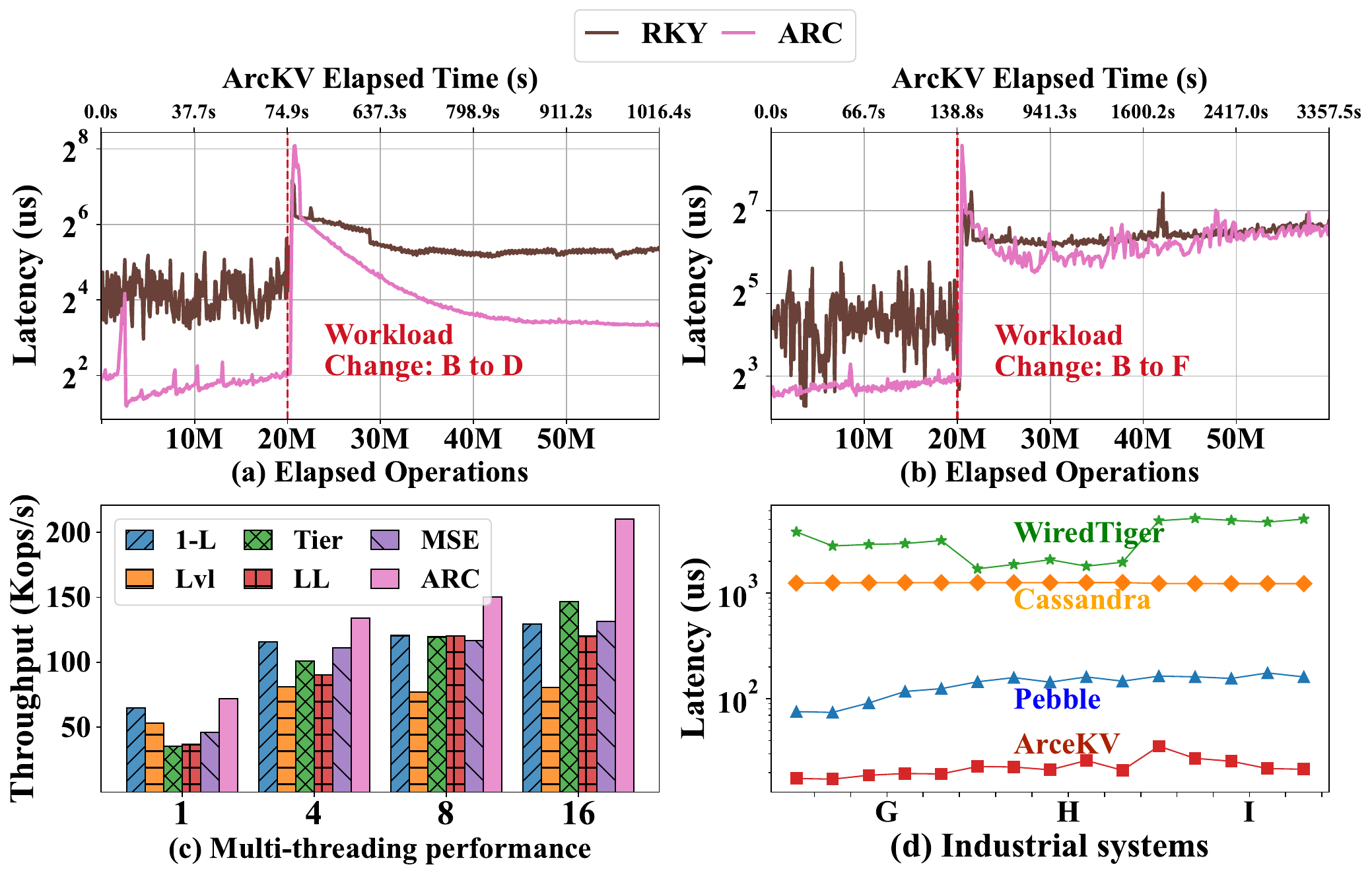}
\end{minipage}%
\begin{minipage}[b]{.5\linewidth}
\hspace{10mm}
\captionsetup{singlelinecheck = false, justification=raggedright, margin={5mm, 2mm}}
\captionof{table}{Performance comparison under diverse query distribution workloads in the YCSB benchmark.
}
\vspace{-3mm}
\raisebox{17.5mm}{
\centering
\scalebox{0.93}{
\renewcommand\arraystretch{1.2}
\begin{tabular}{|l|llllll|}
\hline
              & \multicolumn{6}{c|}{Normalized Throughput}                                                                                                                                        \\ \cline{2-7} 
              & \multicolumn{1}{c}{YCSB-A}  & \multicolumn{1}{c}{YCSB-B}  & \multicolumn{1}{c}{YCSB-C}  & \multicolumn{1}{c}{YCSB-D}  & \multicolumn{1}{c}{YCSB-E}  & \multicolumn{1}{c|}{YCSB-F} \\ \hline
1-L           & 1.44$\times$                & 1.00$\times$                & 1.94$\times$                & 1.00$\times$                & 3.89$\times$                & 1.42$\times$                \\
Lvl           & 1.00$\times$                & 1.25$\times$                & 1.53$\times$                & 1.13$\times$                & 2.79$\times$                & 1.00$\times$                \\
Tier          & 2.77$\times$                & 1.25$\times$                & 1.04$\times$                & 1.14$\times$                & 1.00$\times$                & 2.77$\times$                \\
LL            & 2.11$\times$                & 1.29$\times$                & 1.00$\times$                & 1.24$\times$                & 1.07$\times$                & 2.17$\times$                \\
MSE           & 1.87$\times$                & 1.84$\times$                & 1.62$\times$                & 1.57$\times$                & 2.89$\times$                & 1.89$\times$                \\
RKY           & 2.42$\times$                & 2.05$\times$                & 1.55$\times$                & 1.25$\times$                & 1.45$\times$                & 2.72$\times$                \\
CAM           & 1.37$\times$                & 2.03$\times$                & 1.49$\times$                & 1.60$\times$                & 2.45$\times$                & 1.37$\times$                \\
\textbf{OURS} & {\ul \textbf{4.18$\times$}} & {\ul \textbf{2.26$\times$}} & {\ul \textbf{2.34$\times$}} & {\ul \textbf{1.89$\times$}} & {\ul \textbf{5.47$\times$}} & {\ul \textbf{4.21$\times$}} \\ \hline
\end{tabular}
}}
\label{tab:ycsb}
\end{minipage}
\vspace{-8mm}
\caption{(a) and (b) illustrate the stabilization speed of {\ourkv}; (c) reports its performance under multi-threaded workloads; and (d) compares {\ourkv} with other industrial key–value stores. The table on the right summarizes {\ourkv}’s performance under skewed workloads (e.g., YCSB). Throughputs (operations per second) of all systems are normalized relative to the minimum value observed in each workload.}
\vspace{-6mm}
\label{fig:more_workloads}
\end{figure*}
\vspace{-3mm}
\subsection{System Performance}
\label{sec:sys_perf}

\noindent{\bf Overall, {\ourkv} demonstrates consistently strong performance under both {\workloadfirst} and {\workloadsecond}.}
Figure~\ref{fig:prim_workload_test} compares seven methods across two long-running workloads, {\workloadfirst} and {\workloadsecond}, each composed of multiple subworkloads with diverse read-write ratios, totaling up to 140GiB of data. Across both workloads, {\ourkv} consistently maintains top-tier performance, ranking first in nearly all subworkloads, except during the transition from workload B to workload D. This specific transition involves a sharp shift from an extremely write-intensive workload (B) to a read-intensive one (D), posing a significant challenge for systems to adapt promptly. Despite this abrupt change, {\ourkv} still performs competitively, only slightly trailing behind the two read-optimized baselines, 1-Leveling and Leveling.
In {\workloadsecond}, the transition from F to D is less drastic. Here, {\ourkv} performs similarly with 1-Leveling and Leveling. While these two baselines are optimized for read-heavy workloads, they struggle in write-intensive scenarios due to frequent and aggressive compactions. Conversely, Tiering and LazyLeveling allow more sorted runs per level and adopt lazier compaction strategies, reducing write stall and performing well under write-heavy workloads, but at the cost of degraded read performance due to excessive run accumulation.

Moose does not offer a robust solution for adapting structural configurations across varying workloads. The configurations it generates differ significantly (e.g., size ratios changing from {7,7,7,6,5} to {24,23,24}), making direct transitions between them impractical without incurring severe performance degradation. To mitigate this, we compute a unified configuration for Moose based on the total number of inserted entries and the average workload composition. Similarly, since CAMAL is also designed for static workloads, we provide its trained model with the average workload composition to derive the structural parameters. While both CAMAL and Moose perform well overall under this setting, their performance deteriorates significantly in extreme workloads such as B and E, falling well behind the top-performing baselines.
In contrast, Ruskey dynamically computes the most suitable configuration and employs an efficient adaptation strategy. While Ruskey performs satisfactorily in scenarios with gradual workload changes or sufficient updates, its adaptation can lag under abrupt shifts or under read-intensive workload. 
As shown in Figure~\ref{fig:workload_details}(a) and (b), Ruskey achieves a convergence rate comparable to {\ourkv} under workload F with sufficient updates, but adapts more slowly during transitions to highly read-intensive workloads such as D.

Furthermore, as shown in Figure~\ref{fig:prim_workload_test}(c) and (d), {\ourkv} maintains consistently low P99.9 latency over time, without incurring significant write stall or read overhead compared to other methods. This demonstrates the robustness and stability of {\ourkv} under varying workload conditions.

\noindent{\bf {\ourkv} orchestrates compactions and stalls more intelligently through the flexible LSM structure {\ourlsm}.}
Figure~\ref{fig:workload_details} presents additional metrics from {\workloadfirst} to highlight the benefits of {\ourlsm}'s structural flexibility. Under read-intensive workloads such as A and D, {\ourlsm} is able to reduce the number of sorted runs more quickly than even Leveling by performing multi-level compactions. In contrast, under write-intensive workloads like B, {\ourlsm} defers compactions more aggressively than Tiering and LazyLeveling by allowing more sorted runs to accumulate in the system. This enables {\ourkv} to maintain relatively low write stall time while still performing compactions as eagerly as Leveling when necessary.
As a result, {\ourkv} achieves nearly the same read I/O efficiency as 1-Leveling over time, demonstrating its ability to strike a dynamic balance between compaction aggressiveness and write stall control.
% \begin{figure*}[t]
%     \centering
%     \begin{subfigure}{\linewidth}
%         \includegraphics[width=\linewidth]{figures/exp3_final.pdf}
%     \end{subfigure}
    
%     \vspace{0.5em}  % Optional spacing between images
    
%     \begin{subfigure}{\linewidth}
%         \includegraphics[width=\linewidth]{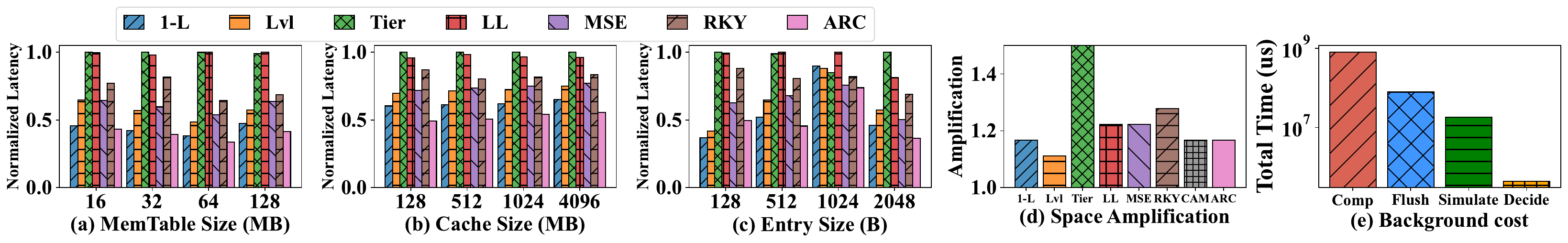}
%     \end{subfigure}
    
%     \caption{Internal cost analysis and parameter studies.}
%     \label{fig:exp3}
% \end{figure*}
\begin{figure*}[t]
    \centering
    \vspace{-8mm}
    \includegraphics[width=\linewidth]{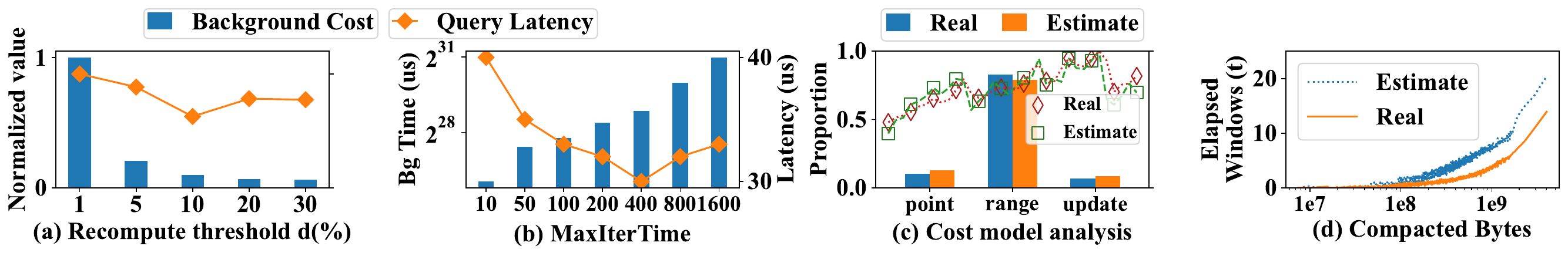}
    \vspace{-9mm}
    \caption{Internal parameter studies of {\ourkv}}
    \vspace{-3mm}
    \label{fig:internal_param}
\end{figure*}
\begin{figure*}[t]
    \centering
    \includegraphics[width=\linewidth]{figures/param.pdf}
    \vspace{-8mm}
    \caption{Common LSM parameter studies}
    \vspace{-4mm}
    \label{fig:common_param}
\end{figure*}
\begin{figure*}[t]
    \centering
    \includegraphics[width=\linewidth]{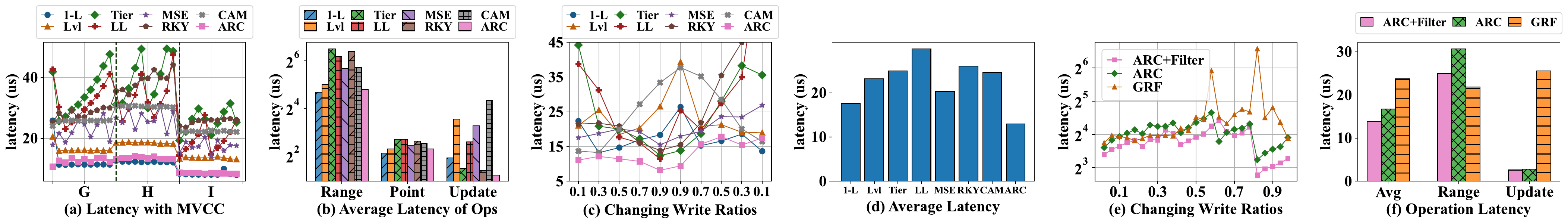}
    \vspace{-8mm}
    \caption{Case studies of various cases: (a) and (b) are the latencies of long multi-versioned chain scenario; (c) and (d) are latencies of continuous workloads; (e) and (f) are latency of {\ourkv} integrated with range filter.}
    \vspace{-6mm}
    \label{fig:case_studies}
\end{figure*}

\noindent{\bf {\ourkv} exhibits superior scalability under concurrent workloads than other baselines.}
As shown in Figure~\ref{fig:more_workloads}(c), we evaluate six methods (excluding Ruskey and CAMAL) using {\workloadthird} under 1, 4, 8, and 16 foreground threads. Ruskey and CAMAL are omitted because they rely on collecting performance metrics from the system and lacks synchronization mechanisms for multi-threaded scenarios. Additionally, the Moose framework does not support multiple background workers, so we fix the number of background threads to one and vary only the number of foreground query threads.
Overall, as the number of threads increases, all methods show improved throughput, but {\ourkv} achieves the greatest gains. This is because higher thread counts intensify updates, accelerating sorted run accumulation. Methods like 1-Leveling and Leveling struggle with frequent write stalls due to low stall thresholds, while Tiering and LazyLeveling perform better thanks to higher thresholds. In contrast, {\ourkv} dynamically adjusts its stall threshold and compaction scheduling (Section~\S\ref{sec:arcekv}) to account for multi-threading, effectively balancing read and write performance and delivering the highest scalability.
% Overall, as the number of threads increases, throughput improves across all methods. However, {\ourkv} shows the largest performance gain. This is because as the number of querying threads grows, update intensity increases, causing sorted runs to accumulate more rapidly. Methods like 1-Leveling and Leveling suffer from increasingly frequent write stalls due to their low stall thresholds, limiting their scalability. Tiering and LazyLeveling benefit from higher stall thresholds, which make them more resilient in multi-threaded environments.
% In contrast, {\ourkv} leverages the multi-threaded extension introduced in Section~\S\ref{sec:arcekv}, dynamically adjusting its write stall threshold and scheduling compactions while accounting for multi-threading effects. This enables it to maintain balance between read and write performance more effectively, resulting in the strongest throughput improvements as thread count increases.

\noindent{\bf {\ourkv} outperforms other industrial key-value stores.}
As shown in Figure~\ref{fig:more_workloads}(d), {\ourkv} achieves the highest throughput when evaluated against Pebble, WiredTiger, and Cassandra on {\workloadthird}. WiredTiger offers strong write performance but suffers on range lookups due to its size-tiered compaction, which merges up to 15 chunks~\cite{Wiredtigerdoc}, leading to more sorted runs and higher lookup overhead. Cassandra incurs substantial local overhead from its distributed consistency mechanisms, resulting in lower performance. Pebble performs slightly worse than RocksDB, likely due to Go’s garbage collection and system call costs. In contrast, {\ourkv} efficiently manages on-disk data and dynamically adjusts write stalls, consistently outperforming these baselines.

\noindent{\bf {\ourkv} delivers strong performance under query-skewed workloads.}
To evaluate {\ourkv}'s ability to handle skewed query distributions, we benchmark it against six alternative methods using the YCSB suite. The workloads include: YCSB-A (Read-Write balanced), YCSB-B (Read-heavy), YCSB-C (Read-only), YCSB-D (Read-heavy with latest keys), YCSB-E (Range-heavy with latest keys), and YCSB-F (Read-Update balanced). Among them, workloads YCSB-C and YCSB-B follow a Zipfian distribution, YCSB-D focuses on the most recent insertions, while YCSB-A, YCSB-E, and YCSB-F use a uniform distribution.
Across all workloads, {\ourkv} consistently achieves top-tier performance.

{\ourkv} performs exceptionally well in write-read mixed workloads because it avoids re-simulating parameter configurations when the workload shifts, resulting in consistent and robust performance. Moreover, by broadening the compaction space through {\ourlsm}, {\ourkv} is able to identify more effective actions to optimize system behavior.

% Read-optimized baselines such as Leveling and 1-Leveling perform well on read-intensive workloads, while write-optimized designs like LazyLeveling and Tiering excel in write-heavy scenarios. In contrast, {\ourkv} effectively adapts to both ends of the workload spectrum, delivering the best overall performance across the board.

\vspace{-3mm}
\subsection{Parameter Studies}

\noindent{\bf Recomputing Threshold $d$}
Figure~\ref{fig:internal_param}(a) shows the background cost and foreground query latency under different values of the recomputing threshold $d$, which determines how much the tree state or workload must change before recomputing the parameter tuple $(M, c, k)$. When $d$ is too small, parameters are recomputed too frequently, resulting in substantial background overhead that may slightly impact foreground query latency. As $d$ increases, the background cost drops sharply and stabilizes beyond $d = 0.1$, where query latency also reaches its lowest point. This suggests that $d = 0.1$ offers a good balance between minimizing recomputation cost and maintaining responsiveness to dynamic workload changes.

\noindent{\bf Simulation \texttt{MaxIterTime}.}
This parameter controls the number of iterations (i.e., compaction decisions) used to evaluate the effectiveness of a given parameter tuple $(M, c, k)$ under the current workload and tree state in the simulation. As shown in Figure~\ref{fig:internal_param}(b), setting \texttt{MaxIterTime} too high can prolong the simulation and delay the timely adaptation of parameters, potentially degrading performance. Conversely, setting it too low may fail to capture the long-term effects of compactions, leading to suboptimal parameter choices. Through empirical evaluation, we find that 400 iterations offer a good balance between responsiveness and evaluation accuracy.

\noindent{\bf Justification of {\ourkv}'s Cost Model.}
To evaluate the accuracy of the cost model proposed in Section~\S\ref{sec:arce}, we compare the model’s predictions against actual system latency during the execution of sub-workload J in {\workloadsecond}. As shown in Figure~\ref{fig:internal_param}(c), the predicted cost closely follows the trend of the observed latency over time. For each operation, we decompose both the theoretical and real costs into three categories: point lookups, range lookups, and updates. The results indicate that the estimated cost for each operation type aligns well with the actual measured latency, validating the effectiveness of our model.

\noindent{\bf Accuracy of Estimating Compaction Windows.} In Section~\S\ref{sec:compaction_est}, we estimate the number of count windows spanned by a compaction using Equation~\ref{eq:t}. Figure~\ref{fig:internal_param}(d) presents over 500 estimated compaction durations, ranging from 4MB to nearly 20GB, compared against their actual elapsed windows when running sub-workload J in {\workloadsecond}. The results indicate that while estimation accuracy decreases slightly for larger compactions with longer execution times, the absolute error remains bounded within 3 windows.

\noindent{\bf Common Parameter Studies of LSM-Trees.}
To assess the robustness of our method, we evaluate {\ourkv} using {\workloadthird} under a variety of commonly used LSM-tree configurations. As these parameters are tuned by CAMAL's model, we exclude it from this evaluation. These include buffer sizes ranging from 16MB to 128MB, cache sizes from 128MB to 4096MB, and entry sizes from 128B to 2048B, presented in Figure~\ref{fig:common_param}(a) to (c). Across all 12 tested configurations, {\ourkv} consistently delivers the strongest or near-best performance, demonstrating its adaptability and resilience to diverse system settings.

\noindent{\bf Space Amplification.}
Although {\ourkv} focuses on read and write performance rather than space efficiency, its space amplification remains well controlled, as shown in Figure~\ref{fig:common_param}(d). Under workload J, which includes 36GiB of new updates and 18GiB of duplicates, Leveling yields the lowest amplification, but {\ourkv} follows closely with only a 0.05 increase. This is because {\ourkv}'s lookup-driven compactions still merge overlapping runs, effectively removing duplicates and limiting space growth.
% Although {\ourkv} is primarily designed to optimize read and write performance rather than space efficiency, its space amplification remains well bounded, as shown in Figure~\ref{fig:common_param}(d). We evaluate this under workload J, which includes 36GiB of new updates and 18GiB of duplicate entries. As expected, Leveling achieves the lowest space amplification. However, {\ourkv} closely follows, with only a marginal increase of 0.05 compared to Leveling. This is because {\ourkv}'s compaction decisions, while primarily aimed at improving lookup performance, also merge overlapping sorted runs, which helps eliminate duplicates and mitigate space growth, even without explicitly optimizing for it.
% \noindent{\bf Case Study: the Selection of $M$ and $c$.}
% Figures~\ref{fig:exp3}(c) and (d) show how {\ourkv} dynamically selects parameters $M$ and $c$ based on the current LSM-tree state and workload. In Figure~\ref{fig:exp3}(c), under a range lookup–intensive workload, $M$ increases when the number of sorted runs is below 10 to reduce runs aggressively, but decreases as the run count grows to avoid costly large compactions. Meanwhile, $c$ gradually increases since new runs are generated more slowly, and aggressive throttling is unnecessary if large compactions are avoided. Figure~\ref{fig:exp3}(d) fixes the number of sorted runs at 20 and varies the lookup ratio. As lookups increase, {\ourkv} raises $M$ to prioritize compactions and lowers $c$ to trigger earlier stalls, to optimize read performance by keeping the tree structure compact.

\noindent{\bf Background Cost in {\ourkv}.}
Figure~\ref{fig:common_param}(e) shows the CPU time spent on background tasks in {\ourkv}, including compaction, flushing, simulation, and decision-making. With suitable values for $d$ and \texttt{MaxIterTime}, simulation adds only ~2\% overhead, and decision-making contributes less than 1\%, due to the efficiency of score-based selection and SIMD acceleration.

\vspace{-2mm}
\subsection{Case Studies}
\label{sec:case_studies}
\noindent{\bf Read Amplification under MVCC.} When updates concentrate on a small key set, write amplification stays low but long version chains can hurt read performance. To evaluate {\ourkv} under this condition, we simulate a multi-version workload with 1 million hot keys over 100 million operations, following Workload III (updates, point, and range lookups). We report average latency and read amplification, as requested by the reviewer, in Figure~\ref{fig:case_studies}(a) and (b).
In theory, Leveling and 1-Leveling are well suited to this setting due to the small size increase (1 GiB) and aggressive pruning. Similarly, {\ourlsm} detects small incoming runs, assigns low compaction penalties (Equation~\ref{eq:effect}), and converges to Leveling-like behavior, achieving comparable read latency and amplification.
\noindent{\bf Continuously Changing Workload.} We gradually vary the write ratio from 10\% to 90\% and back to 10\%, creating a smooth workload gradient rather than abrupt changes. At each ratio, we issue about 500,000 mixed read and write operations to reflect realistic fluctuations. As shown in Figure~\ref{fig:case_studies}(c) and (d), {\ourkv} achieves the best or near-best performance throughout execution and the lowest average latency. These results demonstrate that {\ourkv} maintains stable, adaptive performance under smoothly evolving workloads, enabled by its workload-aware compaction and write-stall mechanisms.

\noindent{\bf {\ourkv} with Range Filters.} As shown in Figure~\ref{fig:case_studies} (e) and (f), to assess how range filters~\cite{grafite,GRF,chen2024oasis} enhance {\ourkv}, we integrate Grafite~\cite{grafite}, a lightweight and easily pluggable range filter, and compare it with GRF~\cite{GRF}, a state-of-the-art LSM-tree system featuring range filtering. Grafite in {\ourkv} is serialized into SSTables during compaction and deserialized during queries. 
% \textcolor{red}{TODO: cost of range lookup}
% \textcolor{red}{
Additionally, we record the effective ratio of the filters (defined as the reduction in run accesses divided by the total number of run accesses) and multiply this ratio by the cost of range lookups.
% }
We evaluate mixed workloads with update ratios from 10\% to 90\%. Overall, {\ourkv}+Grafite improves throughput by about 20\% over the original version and achieves roughly half the latency of GRF. The gap arises because GRF’s LazyLeveling policy improves point lookups and updates but adapts poorly to shifting update ratios, whereas its advanced position encoding gives it an edge in pure range lookups. Incorporating such encoding into {\ourkv} remains a direction for future work.

\vspace{-3mm}
\section{Related Work}
\label{sec:related}

\noindent{\bf Key-value stores.}
Over the past decade, extensive research has advanced key-value stores. Hardware-focused studies~\cite{xu2024ionia,duffy2023dotori,yan2021revisiting,lee2022dinomo,wang2014efficient,vinccon2018noftl,zhang2020fpga,benson2021viper} optimize for modern storage technologies, including advanced SSDs~\cite{yu2022treeline,duffy2023dotori,yan2021revisiting,xu2024ionia,chen2018co}, RDMA~\cite{wang2023replicating,wei2021xstore}, non-volatile memory~\cite{kaiyrakhmet2019slm,lee2022dinomo,zhang2023pm,zhong2023redesigning}, and disaggregated memory architectures~\cite{xiong2024distore,shen2023fusee,hu2024aceso}. These systems improve parallelism~\cite{wang2014efficient} and write throughput~\cite{vinccon2018noftl,zhang2020fpga,benson2021viper} by aligning architectures with hardware capabilities.
In cloud environments, several works~\cite{kejriwal2016slik,Idreos2019DesignCA,sivasubramanian2012amazon} integrate cloud-specific overheads into system design, while others rethink key-value store architectures entirely~\cite{lepers2019kvell,zhang2022depart,adya2019fast,wei2021xstore,zhu2024memory,mammarella2009modular}, enabling new system-level innovations.
Unlike our work, these efforts target diverse environments and architectures, rather than focusing specifically on LSM-tree optimization.

\noindent{\bf Optimization of LSM-based key-value stores.}
There has been rapid progress in optimizing LSM-trees, driven by rethinking their core components. Innovations include advanced compaction policies~\cite{chatterjee2021cosine,moose,dostoevsky2018,lsmbush,huynh2021endure,idreos2019design,sarkar2022constructing,huynh2024towards,vertiorizon,ecotune,zhao2025autumn} and update-friendly compaction schemes~\cite{raju2017pebblesdb,dayanspooky,sarkar2020lethe,sears2012blsm,wu2015lsmtrie,yao2017light,yao2017building}, both aiming to balance write amplification and responsiveness. Other work improves filtering structures like Bloom filters~\cite{dayan2017monkey,dayan2021chucky,li2022seesaw,qader2018comparative,zhang2018elasticbf,zhu2021reducing,Mnemosyne}, range filters~\cite{knorr2022proteus,luo2020rosetta,zhang2018surf,chen2024oasis}, and cache policies~\cite{wu2020ac,wang2024range} to cut unnecessary I/O.
Additional directions leverage emerging hardware~\cite{liang2021kvimr,ahmad2015compaction,thonangi2017log,vinccon2018noftl,wang2014efficient,zhang2020fpga,dflush}, bridge LSM-trees with update-in-place architectures~\cite{yu2022treeline}, and optimize via key-value separation~\cite{lu2017wisckey,dai2020wisckey,dai2020learning}, disaggregated storage~\cite{bindschaedler2020hailstorm}, selective flushing~\cite{balmau2017triad}, and improved memory/concurrency management~\cite{golan2015scaling,shetty2013building,bortnikov2018accordion,kim2020robust,luo2020breaking}. Other studies target tail-latency reduction~\cite{balmau2019silk,luo2019performance,sears2012blsm}, exploit data characteristics~\cite{absalyamov2018lightweight,ren2017slimdb,yang2020leaper,aster}, or adapt LSM-trees to cloud environments~\cite{polardb,chatterjee2021cosine,chatterjee2024limousine}. Most, however, optimize LSM-based key-value stores without explicitly incorporating workload characteristics into their design.

% ~\cite{dong2017optimizing,dong2021rocksdb}

% \begin{figure*}
%     \centering
%     \includegraphics[width=\linewidth]{figures/mvcc.pdf}
%     \caption{100 million updates on only 1 million keys.}
%     \label{fig:placeholder}
% \end{figure*}
% \begin{figure}
%     \centering
%     \includegraphics[width=\linewidth]{figures/continuous.pdf}
%     \caption{Continuously changing workloads}
%     \label{fig:placeholder}
% \end{figure}

% \noindent{\bf Workload-Adaptive Optimization for LSM-Based Systems.}
% Recent works~\cite{moose,dostoevsky2018,camal,lsmbush,huynh2021endure,chatterjee2021cosine,huynh2024towards} optimize LSM-tree configurations based on workload characteristics, but assume workloads remain static. These methods often degrade when faced with significant workload shifts. Ruskey~\cite{ruskey} tackles this by using a reinforcement learning (RL) model to gradually adapt configurations over time. However, its reliance on frequent updates delays transitions in read-heavy workloads, leading to suboptimal performance.

\noindent{\bf Cost-Model-Based Compaction Optimization.}
Unlike traditional Leveling and Tiering policies, recent LSM-tree optimization techniques employ cost models that quantify the cost of different operations to identify the most efficient compaction strategy~\cite{dostoevsky2018,lsmbush,ecotune,moose,vertiorizon,camal,ruskey}. These approaches can be broadly classified into two categories: workload-unaware and workload-aware methods. The former seek generally robust performance across diverse workloads, whereas the latter aim to achieve optimal efficiency for specific workload patterns.
Workload-unaware methods such as LazyLeveling~\cite{dostoevsky2018}, EcoTune~\cite{ecotune}, Moose~\cite{moose}, Spooky~\cite{dayanspooky}, and QLSM-Bush~\cite{lsmbush} model the costs of various operations and theoretically determine structural parameters or compaction behaviors that provide stable performance across diverse workloads. In contrast, workload-aware approaches such as Vertiorizon~\cite{vertiorizon}, CAMAL~\cite{camal}, Endure~\cite{huynh2021endure}, and SMoose~\cite{moose}, explicitly incorporate the operation probabilities of a given workload into their cost models, enabling more accurate parameter tuning for targeted scenarios. However, while workload awareness improves specialization, adapting between parameter configurations as workloads shift can incur substantial transition costs. Ruskey~\cite{ruskey} addresses this issue using reinforcement learning to mitigate transition inefficiencies. By contrast, {\ourkv} not only considers workload composition but also explicitly models and optimizes the {\it transition process} itself, enabling continuous performance optimization rather than discrete parameter transitioning.

% \noindent{\bf Other Adaptive Indexes.} Beyond LSM-tree adaptations for workload characteristics, prior work has explored tuning other index structures such as B-trees~\cite{idistance}, R-trees~\cite{rlr-tree}, Trie~\cite{hattrie}, learned indexes~\cite{alex,dili} and their hybrids tailored to specific data types, including key value pairs, graphs, spatial data, and temporal data. These adaptations target diverse workload patterns to efficiently support queries such as kNN~\cite{toain,idistance}, spatial range searches~\cite{rlr-tree}, and time-series processing~\cite{ads}.

\vspace{-3mm}
\section{Conclusion}
Existing LSM-based key-value stores perform poorly under dynamic workloads due to static compaction and write-stall policies. We propose {\ourkv}, a workload-driven system that adaptively schedules compactions and adjusts write-stall thresholds based on the current workload and LSM-tree state. Experiments show that {\ourkv} consistently outperforms state-of-the-art approaches under evolving workloads.
% Existing LSM-based key-value stores fall short under dynamic workloads due to their static compaction and write stall configurations. We present {\ourkv}, a workload-driven system that adaptively schedules compactions and adjusts write stall thresholds based on the current workload and LSM-tree state. Experiments show that {\ourkv} consistently outperforms state-of-the-art methods, delivering robust performance under evolving workloads.
\vspace{-3mm}
\section{Acknowledgement}
This research is supported by the Ministry of Education, Singapore, under its AcRF Tier 2 programme (MOE-T2EP20224-0005), and NTU SUG-NAP (022029-00001). Any opinions, findings and conclusions or recommendations expressed in this material are those of the author(s) and do not reflect the views of the Ministry of Education, Singapore.
% This research is supported by the National Research Foundation, Singapore under its Frontier CRP Grant (NRF-F-CRP-2024-0005), and NTU SUG-NAP (022029-00001). Any opinions, findings, and conclusions or recommendations expressed in this material are those of the author(s) and do not reflect the views of the National Research Foundation, Singapore.
\section{Theoretical Insights}
\label{sec:proof-details}
\subsection{Proof of Lemma \ref{lmm:nphard}}
In our problem setting, the workloads are constantly evolving. Therefore, we can assume each workload only remains for a specific number of windows, and the $i$-th window corresponds to a workload composition $(r_i,u_i,p_i)$.

% Recall that in the $j$-th compaction, the cost is:
% \begin{align}
% \sum_{i=0}^{t-1}R(s_j+i)+P(s_j+i)+U(s_j+i,c_j)
% \end{align}
% where $t=t(X_j,s_j,c_j,k_j)$.

We consider the compaction sequence under a fixed number of windows $W$ by setting $r_i=u_i=p_i=0$ for $i>W$, which means the compaction sequence should span $W$ windows. Note that a ``no compaction'' strategy can also be considered as a compaction, which spans 1 window but does not reduce any sorted runs. Thus, the denominator part of the average cost is fixed and we only consider the numerator part (i.e., total cost). Now, we reduce the decision version of our new problem from the NP-complete Equal-Cardinality Partition problem \cite{partitionprob}.

The equal-cardinality partition problem is described as follows: given a multiset $\left\{s_1,s_2,\cdots,s_{2n}\right\}$, determine whether there exists a disjoint partition into two subsets $\left\{s_{i_1},s_{i_2},\cdots,s_{i_n}\right\}$ and $\left\{s_{j_1},s_{j_2},\cdots,s_{j_n}\right\}$, such that their sums and sizes are equal.

Let the initial tree state $S=\left\{s_1,\cdots,s_{2n}\right\}$ be the target set in the partition problem. Set the maximum number of layers of the LSM-tree as 1, such that all sorted runs lie within the same layer and we can compact any subset of them in $S$ (i.e., intra-level compaction). Set $I_w$ to be a sufficiently small number and $I_r=1$, such that the I/O cost is determined only by lookups. Set $W=3$, and then there are two possible cases for an optimal compaction sequence:
\begin{enumerate}[leftmargin=*]
    \item Perform a compaction that spans the first window, another compaction that spans the second window, and a ``no compaction'' that spans the third window;
    \item Perform a compaction that spans the first and second windows, and a ``no compaction'' that spans the third window.
\end{enumerate}

We would show as follows that by appropriately setting $B,r_i,p_i$ and $\alpha$, the optimal compaction sequence lies within case 1 and has a specific upper bound in its objective function if and only if a valid partition exists.

We now analyze the costs of compactions of case 1. For case 1, the cost of the first compaction is:
\begin{align}
    r_1\cdot 2n+p_1\cdot(\alpha\cdot 2n+1)
\end{align}
and should be no less than $\frac{X_1}{B}$. Similarly, the cost of the second compaction is:
\begin{align}
    r_2\cdot (2n-y_1+1)+p_2\cdot (\alpha\cdot (2n-y_1+1)+1)
\end{align}
and should be no less than $\frac{X_2}{B}$. The cost of the third ``no compaction'' is:
\begin{align}
    r_3\cdot(2n-y_1-y_2+2)+p_3\cdot(\alpha\cdot(2n-y_1-y_2+2)+1)
\end{align}

If we set $B,r_i,p_i$ and $\alpha$ appropriately, such that:
\begin{equation}
  \begin{cases}
    r_1\cdot 2n+p_1\cdot(\alpha\cdot 2n+1)=\frac{T}{2B}
    \\r_2\cdot (n+2)+p_2\cdot (\alpha\cdot (n+2)+1)=\frac{T+2F}{2B}
\end{cases}  
\label{eq:condition}
\end{equation}
where $T=\sum_{i=1}^{2n}s_i$ is the sum of the set $S$. Then it is straightforward to check there is a valid compaction sequence with $X_1=\frac{T}{2},y_1=n-1,X_2=\frac{T+2F}{2},y_2=n$ and $X_3=0,y_3=0$ if a valid partition exists.

\noindent\textbf{Claim.} By setting $B,r_i,p_i$ and $\alpha$ to satisfy Equation \ref{eq:condition}, the average cost is at most $\frac{\frac{T+F}{B}+3r_3+p_3\cdot(3\alpha+1)}{3}$ if and only if a valid partition exists.
\begin{proof}
    If a valid partition exists, there is a valid compaction sequence with $X_1=\frac{T}{2},y_1=n-1,X_2=\frac{T+2F}{2},y_2=n,X_3=0,y_3=0$ as discussed above, which derives the desired results.

    Assume that a valid partition does not exist. We prove that the compaction costs exceed our desired results. Firstly consider a compaction sequence under case 1. For simplicity, we denote $C=3r_3+p_3\cdot(3\alpha+1)$. The first compaction must satisfy $X_1\leq\frac{T}{2}$ in order to finish within the first window, and consider cases for $y_1$:
    \begin{itemize}[leftmargin=*]
        \item $y_1<n-1$: the compaction costs are at least:
        $$\frac{T}{2B}+\textcolor{red}{r_2\cdot(2n-y_1+1)+p_2\cdot(\alpha\cdot(2n-y_1+1)+1)}+C$$
        where the middle term (in red) is larger than $\frac{T+2F}{2B}$, yielding a larger result;
        \item $y_1=n-1$: then $X_1<\frac{T}{2}$ since there does not exist a valid partition, and $X_2\leq\frac{T+2F}{2}$ such that the second compaction can finish within one window. However, these two conditions imply that $X_1+X_2<T+F$, which means the compaction sequence does not involve all elements in $S$ and the newly inserted sorted run in the first window. Then $y_1+y_2<2n-1$ and the compaction costs are at least:
        $$\frac{T+F}{B}+\textcolor{red}{(2n-y_1-y_2+2)r_3+(\alpha\cdot(2n-y_1-y_2+2)+1)p_3}$$
        where the latter term (in red) is larger than $C$, yielding a larger result;
        \item $y_1>n-1$: this implies $X_2<\frac{T+2F}{2}$ such that the second compaction can finish within one window. Then $X_1+X_2< T+F$, which implies $y_1+y_2<2n-1$, and this is similar to the case when $y_1=n-1$.
    \end{itemize}
    Next consider a compaction sequence under case 2, which implies the first compaction satisfies $X_1>\frac{T}{2}$. However, the compaction costs are at least:
    $$\frac{T}{2B}+\textcolor{red}{r_2\cdot(2n+1)+p_2\cdot(\alpha\cdot(2n+1)+1)}+C$$
    where the middle term (in red) is larger than $\frac{T+2F}{2B}$, yielding a larger result.
\end{proof}

\subsection{Technical Insights of Dominating Compaction}
In Section~\ref{sec:arce}, we propose pruning the large compaction action space by restricting attention to dominating compactions. In what follows, we establish two key results: (1) under this restriction, dominating compactions must be at least partially involved; otherwise, an optimal average cost cannot be achieved; and (2) for each dominating compaction, there exists a corresponding set of hyperparameters $(M,k,c)$ under which that compaction will be selected whenever the hyperparameters are appropriately configured.

\vspace{1mm}
\noindent\textbf{Claim 1.} If $A \prec B$, $A$ appears in the optimal compaction sequence $C$, and $B$ is non-dominated but none of $B$’s runs occur in $C$, then there exists a compaction sequence that includes $B$ and has an average cost no greater than that of $C$.

Suppose such a situation occurs. We replace $A$ with $B$ in the sequence. Since $t_A > t_B$, we wait for $t_A - t_B$ windows after executing $B$ to ensure that the newly flushed runs remain consistent with the original sequence. Let $s_A$ denote the resulting run produced by compaction $A$. For each subsequent compaction $C_i$ in the original sequence, if $C_i$ does not involve $s_A$, we keep its runs unchanged. Otherwise, we replace $s_A$ with the original runs of $A$. Because the reduced runs of $B$ are no smaller than those of $A$, the cost and the number of windows consumed by each updated $C_i$ will not increase. If an updated $C_i$ consumes fewer windows in the new sequence, we insert an appropriate waiting period to keep it aligned with the original timeline. Finally, since the cost of $B$ is smaller than that of $A$, the resulting sequence has an average cost no greater than that of the original sequence, and in fact strictly smaller when the replacement occurs. This completes the argument.
% }

\vspace{1mm}
\noindent\textbf{Claim 2.} For each non-dominated compaction, there exist parameters $(M,c,k)$ such that its score is the highest.
We prove it by claiming that for each non-dominated compaction upon the decision point, there exist some $(M_i,k_i,c_i)$ such that its effectiveness score is the highest. To achieve this, assume the compaction reduces $y$ runs and costs $t$ time, $(M_i,k_i,c_i)$ should satisfy:
$$M_i\cdot a\cdot (y-y')-a\cdot (t-t')-uk_i[\max(0,t+s-c_i)-\max(0,t'+s-c_i)]\geq 0$$
for all $(y-y')(t-t')>0$, where $a=r+\alpha\cdot p$ is a constant.

We first analyze the case $y>y'$ and $t>t'$: in the worst case, $y'=y-1$ and $t'=1$, then:
$$
M_i\geq\begin{cases}
    \frac{(a+uk_i)(t-1)}{a}&\quad 0\leq c_i\leq s+1,\\
    t-1+\frac{uk_i}{a}(t+s-c_i) &\quad s+1<c_i\leq s+t,\\
    t-1 &\quad otherwise
\end{cases}
$$

For the case $y<y'$ and $t<t'$: in the worst case, $y'=s-1$ and $t'=t+1$, then:
$$
M_i\leq\begin{cases}
    \frac{a+uk_i}{a(s-y-1)}&\quad 0\leq c_i\leq s+t,\\
    \frac{1}{s-y-1} &\quad otherwise
\end{cases}
$$

Therefore, by setting $c_i=s+t,k_i=\frac{a(s-y-1)}{u}(t-1)$ and $M_i=t-1$, the compaction has the highest effectiveness score.

\bibliographystyle{ACM-Reference-Format}
\bibliography{sample-base}

\end{document}